\documentclass[10pt,onecolumn,draftclsnofoot,journal]{IEEEtran}

\IEEEoverridecommandlockouts                              % This command is only needed if 
                                                          % you want to use the \thanks command

%\overrideIEEEmargins                                      % Needed to meet printer requirements.

% See the \addtolength command later in the file to balance the column lengths
% on the last page of the document
\usepackage{amsmath,graphicx,multicol}
\usepackage{amsfonts}
\usepackage{color}
\usepackage{bbm}
\usepackage{cite}
\usepackage{amssymb}
\usepackage{algorithm}
\usepackage{algorithmic}
\usepackage{tabularx}
\usepackage[utf8]{inputenc}
\usepackage[english]{babel}

\usepackage{amsthm}
\usepackage{subcaption}
\usepackage[font={footnotesize}]{caption}
\usepackage{url}
\usepackage{mathtools}
\usepackage{comment}
\usepackage[keeplastbox]{flushend}
\newtheorem{definition}{Definition}[]
\newtheorem{proposition}{Proposition}[]
\newtheorem{theorem}{Theorem}[]
\newtheorem{corollary}{Corollary}[theorem]
\newtheorem{lemma}[]{Lemma}
\newcommand\norm[1]{\left\lVert#1\right\rVert}
\newcommand{\argmax}{\arg\!\max}

\DeclareMathOperator{\E}{\mathbb{E}}

\def\x{{\mathbf x}}
\def\z{{\mathbf z}}
\def\y{{\mathbf y}}

\def\B{{\mathbf B}}
\def\H{{\mathbf H}}

\def\P{{\mathbf P}}

\def\R{{\mathbb{R}}}

\def\I{{\mathbf I}}

\def\F{{\mathbf F}}

\def\x{{\mathbf x}}
\def\w{{\mathbf w}}

\def\X{{\mathbf X}}

\def\Y{{\mathbf Y}}

\def\B{{\mathbf B}}
\def\P{{\mathbf P}}

\def\R{{\mathbb{R}}}

\def\I{{\mathbf I}}

\def\h{{\mathbf h}}

\def\v{{\mathbf v}}

\newcommand{\ts}{\textsuperscript}
\newcommand{\abs}[1]{\lvert #1 \rvert}

\newcommand{\etal}{{\em et~al.~}}
\newcommand{\Var}{\mathrm{Var}}

\let\oldref\ref
\renewcommand{\ref}[1]{(\oldref{#1})}
\newcommand{\RNum}[1]{\uppercase\expandafter{\romannumeral #1\relax}}
\makeatletter
\renewcommand{\fnum@figure}{Fig.~\thefigure}
\makeatother

\title{
Randomized Greedy Sensor Selection: Leveraging Weak Submodularity
%for Kalman Filtering Under Resource Constraints
}

\author{Abolfazl~Hashemi,~\IEEEmembership{Student Member,~IEEE,} Mahsa~Ghasemi,~\IEEEmembership{Student Member,~IEEE,} Haris~Vikalo,~\IEEEmembership{Senior Member,~IEEE,} and Ufuk~Topcu%,~\IEEEmembership{Member,~IEEE}%$^{*}$ %
\thanks{Abolfazl Hashemi and Haris Vikalo are with the Department of Electrical and Computer Engineering, 
University of Texas at Austin, Austin, TX 78712 USA. Mahsa Ghasemi is with the Department of Mechanical 
Engineering, University of Texas at Austin, Austin, TX 78712 USA. Ufuk Topcu is with the Department of Aerospace 
Engineering and Engineering Mechanics, University of Texas at Austin, Austin, TX 78712 USA. Parts of the 
results in the paper were presented at the American Control Conference, Milwaukee, Wisconsin, USA, 
June 2018 \cite{ma}.}%
}
\begin{document}
%\ninept
%
 \maketitle
\begin{abstract}
We study the problem of estimating a random process from the observations collected by a network 
of sensors that operate under resource constraints. When the dynamics of the process and sensor 
observations are described by a state-space model and the resource are unlimited, the conventional 
Kalman filter provides the minimum mean-square error (MMSE) estimates. However, 
at any given time, restrictions on the available communications bandwidth and computational capabilities 
and/or power impose a limitation on the number of network nodes whose observations can be used to 
compute the estimates. We formulate the problem of selecting the most informative subset of the sensors as 
a combinatorial problem of maximizing a monotone set function under a uniform matroid constraint. 
For the MMSE estimation criterion we show that the maximum element-wise 
curvature of the objective function satisfies a certain upper-bound constraint and is, therefore, weak
submodular. We develop an efficient randomized greedy algorithm for sensor selection and establish 
guarantees on the estimator's performance in this setting. Extensive simulation results demonstrate the
efficacy of the randomized greedy algorithm compared to state-of-the-art greedy and semidefinite 
programming relaxation methods.
\end{abstract}
 \begin{IEEEkeywords}
sensor selection, sensor networks, Kalman filtering, weak submodularity 
 \end{IEEEkeywords}
%%%%%%%%%%%%%%%%%%%%%%%%%%%%%%%%%%%%%%%%%%%%%%%%%%%%%%%%
%%%%%%%%%%%%%%%%%%%%%%%Introduction%%%%%%%%%%%%%%%%%%%%%
%%%%%%%%%%%%%%%%%%%%%%%%%%%%%%%%%%%%%%%%%%%%%%%%%%%%%%%%
\vspace{-0.2cm}
\section{Introduction}\label{sec:intro}
\IEEEPARstart{M}{odern} sensor networks deploy a large number of nodes that either exchange their
noisy and possibly processed observations of a random process or forward those observations to a data 
fusion center. Due to constraints on computation, power and communication resources, instead of 
estimating the process using information collected by the entire network, the fusion center typically 
queries a relatively small subset of the available sensors. The problem of selecting the sensors that 
would acquire the most informative observations arises in a number of 
applications in control and signal processing systems including sensor selection for Kalman filtering \cite{nordio2015sensor,shamaiah2010greedy,tzoumas2016sensor}, batch state and stochastic 
process estimation \cite{tzoumas2016scheduling,tzoumas2016near}, minimal actuator placement 
\cite{summers2016submodularity,tzoumas2015minimal}, voltage control and meter placement in 
power networks \cite{damavandi2015robust,gensollen2016submodular,liu2016towards}, sensor 
scheduling in wireless sensor networks \cite{shamaiah2012greedy,nordio2015sensor}, and subset 
selection in machine learning \cite{mirzasoleiman2014lazier}.

For a variety of performance criteria, finding an optimal subset of sensors requires solving a 
computationally challenging combinatorial optimization problem, possibly using branch-and-bound search 
\cite{welch1982branch}. By reducing it to the set cover problem, sensor selection was in fact
shown to be NP-hard \cite{williamson2011design}. This hardness result has motivated development of numerous heuristics 
and approximate algorithms. For instance, \cite{joshi2009sensor} formulated the sensor selection 
problem as the maximization (minimization) of the $\log \det$ of the Fisher information matrix (error 
covariance matrix), and found a solution by relaxing the problem to a semidefinite program (SDP). 
The computational complexity of finding the solution to the SDP relaxation of the sensor selection 
problem is cubic in the total number of available sensors, which limits its practical feasibility in large-scale networks consisting of many sensing nodes. Moreover, the solution to the SDP 
relaxation comes 
with no performance guarantees. To overcome these drawbacks, Shamaiah et al. 
\cite{shamaiah2010greedy} proposed a greedy algorithm guaranteed to achieve at least 
$(1-1\slash e)$ of the optimal objective at a complexity lower than that of the SDP relaxation. 
The theoretical underpinnings of the greedy approach to the sensor selection problem in 
\cite{shamaiah2010greedy} are drawn from the area of submodular function optimization.
In particular, these results stem from the fact that the logarithm of the determinant ($\log \det$)
of the Fisher information matrix is a monotone 
submodular function. Nemhauser et al. \cite{nemhauser1978analysis} studied maximization 
of such a function subject to a uniform matroid constraint and showed that the greedy 
algorithm, which iteratively selects items providing maximum marginal gain, achieves $(1-1/e)$
approximation factor. More recently, 
\cite{tzoumas2016scheduling,tzoumas2015minimal,tzoumas2016near,tzoumas2016sensor}, 
employed and analyzed greedy algorithms for finding approximate solutions to the $\log \det$ 
maximization problem in a number of practical settings. 

Most of the existing work on greedy sensor selection has focused on optimizing the $\log \det$ of the 
Fisher information matrix, an objective indicative of the volume of the confidence ellipsoid. 
However, this criterion does not explicitly relate to the mean-square error (MSE) which is often 
a natural performance metric of interest in estimation problems. 
The MSE, i.e., the trace of the covariance matrix of the estimation error, is not supermodular 
\cite{krause2008near,jawaid2015submodularity,zhang2015sensor,singh2017supermodulara,
singh2017supermodular,olshevsky2017non}. Therefore, its negative value, which we would
like to maximize, is not submodular. Consequently, the setting and results of 
\cite{nemhauser1978analysis} do not apply to the MSE minimization problem.
%%%%%%
%%%%%%%%%%%%%%%%%%%%%%%%%%%%%%%%%%%%%%%%%%%%%%%%%%%%%%%%%%%%%%
Recently, Wang \etal \cite{wang2016approximation} analyzed performance of the greedy algorithm 
in the general setting of maximizing a monotone non-decreasing objective function that is not necessarily 
submodular. They used a notion of the total curvature $\mu$ of the objective function to show that the 
greedy algorithm provides a $((1+\mu)^{-1})$-approximation under a matroid constraint. However, 
determining the elemental curvature defined in \cite{wang2016approximation} is itself an NP-hard problem. 
Therefore, providing performance guarantees for the settings where the objective function is not 
submodular or supermodular, such as the trace of the covariance matrix of the estimation error in the sensor 
selection problem, remains a challenge. 
\begin{figure}[t]
	\centering
	\includegraphics[width=0.49\textwidth]{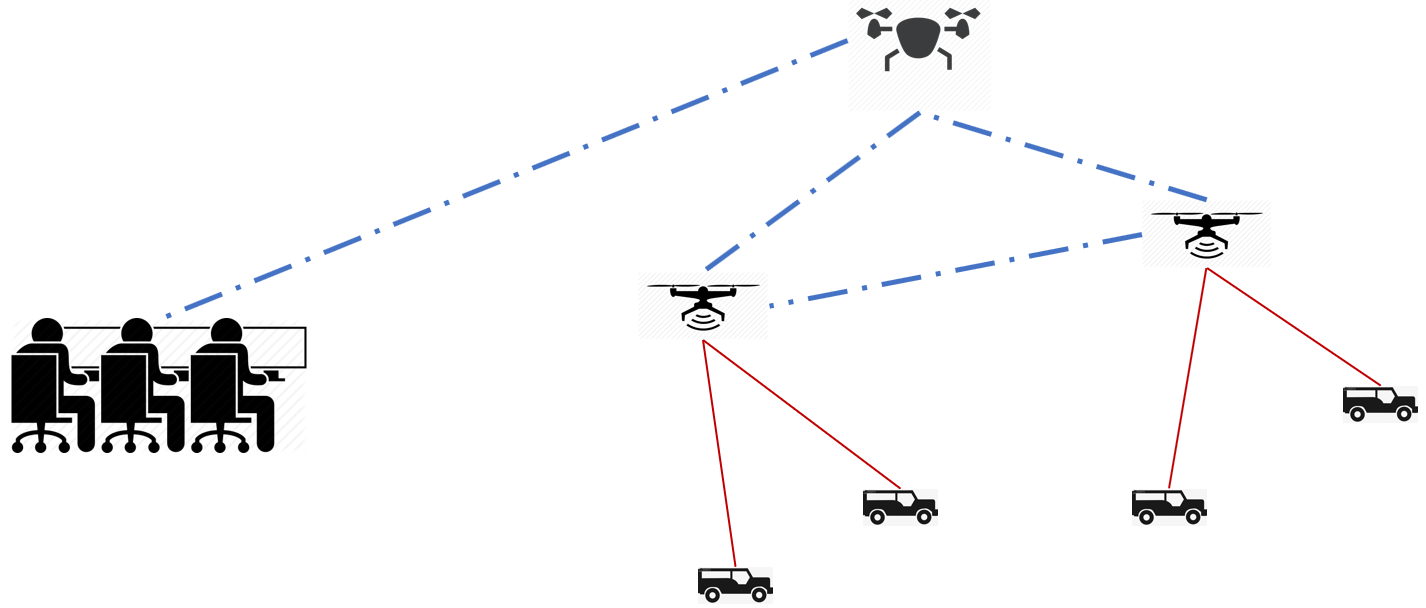}
	\caption{Multi-object tracking via a swarm of UAVs. The UAVs can communicate with each other 
	and are equipped with GPS and radar systems. The objective is to select a small subset of range 
	and angular measurements gathered by the UAVs to communicate to the control unit.}
	\label{fig:drone}
	\vspace{-0.3cm}
\end{figure}
%%%%%%%%%%%%%%%%%%%%%%%%%%%%%%%%%%%%%%%%%%%%%%%%%%%%%%%%%%%%%%
%%%%%%%%%%%%%%%%%%%%%%%%%%%%%%%%%%%%%%%%%%%%%%%%%%%%%%%%%%%%%%

On another note, processing massive amounts of data collected by modern large-scale networks may 
be challenging even for greedy algorithms. To further reduce the computational burden of 
maximizing a monotone increasing, submodular function subject to cardinality constraints, the authors of
\cite{mirzasoleiman2014lazier} proposed a stochastic greedy algorithm that achieves 
($1-1/e-\epsilon$)-approximation factor, where $\epsilon$ denotes a parameter that can be varied to 
explore the performance-complexity trade-off. However, the results of \cite{mirzasoleiman2014lazier} 
do not apply to the sensor selection problem under the (non-submodular) MSE objective.

In this paper, we formulate the task of selecting sensors in a large-scale network as the problem of 
maximizing a monotone non-submodular objective function that is directly related to the mean-square
estimation error. By analyzing curvature of the objective function, we derive sufficient conditions 
under which the function is weak submodular. In the important setting where a state-space model describes the dynamics of the process and sensor observations, we propose a randomized greedy 
algorithm and find a bound on the MSE of the state estimate formed by the Kalman filter that uses 
the measurements of the sensors selected by the proposed algorithm. A further implication 
of these results is that, when the measurement vectors are Gaussian or Bernoulli as frequently encountered in reduced-dimensionality Kalman filtering using random projections \cite{berberidis2016data},
the MSE objective is weak submodular with high probability. Our extensive simulations demonstrate 
that the proposed randomized greedy sensor selection scheme significantly outperforms both greedy 
and SDP relaxation methods in terms of computational complexity, and hence runtime, while providing
essentially the same or improved MSE performance. 
 
The rest of the paper is organized as follows. Section \oldref{sec:sys} presents a motivating example
and sets up the system model. In 
Section \oldref{sec:alg} we describe the novel formulation of the sensor selection problem and 
derive a bound on the curvature of the MSE-related objective function. In Section \oldref{sec:anl}, we 
introduce the randomized greedy algorithm and analyze its performance. Section \oldref{sec:sim} 
presents the simulation results while Section \oldref{sec:concl} states the concluding remarks. 

\textbf{Notation:} 
Bold capital letters denote matrices while bold lowercase letters represent vectors. 
$H_k(i,j)$ is the $(i,j)$ entry of the time-varying matrix $\H_k$ at time $k$, $\h_{k,j}$ 
is the $j\ts{th}$ row of $\H_k$, $\H_{k,S}$ is a submatrix of $\H_k$ that consists of 
the rows of $\H_k$ indexed by the set $S$, and $\lambda_{max}(\H_k)$
and $\lambda_{min}(\H_k)$ are the maximum and minimum eigenvalues of $\H_k$, respectively. Spectral ($\ell_2$) norm of a matrix is denoted by $\|.\|$. $\I_n \in \R^{n\times n}$ is the identity matrix. Moreover, let $[n] := \{1,2,\dots,n\}$.
%%%%%%%%%%%%%%%%%%%%%%%%%%%%%%%%%%%%%%%%%%%%%%%%%%%%%%%%
%%%%%%%%%%%%%%%%%%%%%%%Notations&stuff%%%%%%%%%%%%%%%%%%
%%%%%%%%%%%%%%%%%%%%%%%%%%%%%%%%%%%%%%%%%%%%%%%%%%%%%%%%
\vspace{-0.2cm}
\section{System Model and Problem Formulation}\label{sec:sys}
This section starts by a description of a motivating example of multi-object tracking under communication 
and power constraints. Then we proceed to define the system model and mathematically formulate the sensor 
selection problem studied in the paper.
%%%%%%%%%%%%%%%%%%%%%%%%%%%%%%%%%%%%%%%%%%%%%%%%%%%%%%%%%%%%%%%%%%%%%
\vspace{-0.2cm}
\subsection{Motivating example: Accelerated multi-object tracking}\label{sec:uav}
Consider a tracking system, shown in Fig. \oldref{fig:drone}, where a control unit surveys an area via a swarm of unmanned aerial vehicles (UAVs). The UAVs are equipped with GPS and radar systems and can communicate with each other over locally established communication channels. However, only a few of the UAVs known as {\it swarm leaders} are allowed to communicate to the control unit because of various practical restrictions such as power constraints. The UAVs patrol the area according to a predefined search pattern to gather information about the location of mobile objects of interest. Each UAV, by using its radar system, acquires range and angular measurements of all the objects that are within the maximum radar detection range and are capable of transmitting those measurements to the swarm leaders. Due to limitations on the rate of communication between the swarm leaders and the control unit, and to reduce delays in tracking from high computation, only a subset of the gathered measurements is communicated to the control unit. In order to track the locations of the object, the control unit employs Kalman filtering using the received measurements. Therefore, the goal of swarm leaders is to perform {\it sensor scheduling} and select a subset of range and angular measurements such that (i) the communication constraint is satisfied, and (ii) the mean-square error of the Kalman filter estimate of the objects' locations is minimized.
%%%%%%%%%%%%%%%%%%%%%%%%%%%%%%%%%%%%%%%%%%%%%%%%%%%%%%%%%%%%%%%%%%%%%
\vspace{-0.2cm}
\subsection{System model}
Consider a discrete-time, linear, time-varying state-space model described by
\begin{equation}\label{eq:sys}
\begin{aligned}
\x_{k+1} &= \mathbf{A}_k\x_k+\w_k \\
\y_k &= \H_k\x_k+\v_k,
\end{aligned}
\end{equation}
where $\x_k \in \R^m$ is the state vector at time $k$ that we aim to estimate, $\y_k \in \R^n$ is the 
measurement vector, $\w_k \in \R^m$ and $\v_k \in \R^n$ are zero-mean white Gaussian noise
processes with covariances $\mathbf{Q}_k$ and $\mathbf{R}_k$, respectively, $\mathbf{A}_k \in  
\R^{m \times m}$ is the state transition matrix and $\H_k \in  \R^{n \times m}$ is the matrix 
whose rows at time $k$ are the measurement vectors $\h_{k,i} \in \R^m$, $1 \le i \le n$. 
We assume the states $\x_k$ are uncorrelated with $\w_k$ and $\v_k$. Additionally, we 
assume that $\x_0 \sim {\cal N}(0, \mathbf{\Sigma}_x)$, $\mathbf{Q}_k=\sigma^2\I_m$, and 
$\mathbf{R}_k=\mathrm{diag}(\sigma_1^2,\dots,\sigma_n^2)$. Note that, unlike the past work
on greedy sensor selection in
\cite{shamaiah2010greedy,shamaiah2012greedy,chamon2017approximate,chamon2017mean},
this model does not restrict the noise covariance matrix to be a multiple of identity.

Due to limited resources, fusion center aims to select $K$ out of $n$ sensors and use their 
measurements to estimate the state vector $\x_k$ such that the trace of the covariance matrix 
of the estimation error, i.e., the MSE of the estimator implemented using the Kalman 
filter is minimized. Similar to prior work in 
\cite{joshi2009sensor,shamaiah2010greedy,shamaiah2012greedy}, we assume that the 
measurement vectors $\h_{k,i}$ are available at the fusion center. Let $\hat{\x}_{k|k-1}$ and
$\hat{\x}_{k|k}$ denote the predicted and filtered linear minimum mean-square error (LMMSE) estimators of 
$\x_k$, respectively. In other words,
$\hat{\x}_{k|k-1}$ is the LMMSE estimator of $\x_k$ given $\{\y_{S_1},\dots,\y_{S_{k-1}} \}$
and
$\hat{\x}_{k|k}$ is the LMMSE estimator of $\x_k$  given  $\{\y_{S_1},\dots,\y_{S_k} \}$,
where $S_j$ denotes the set of sensors selected at time $j$ and $\y_{S_j}$ denotes the vector
of measurements collected by those sensors.
Moreover, let $\P_{k|k-1}$ and $\P_{k|k}$ denote the predicted and filtered error covariance 
matrix of the Kalman filter at time instant $k$, respectively, i.e.,
\begin{equation*}
	\begin{aligned}
		\P_{k|k-1} &= \mathbf{A}_k\P_{k-1|k-1}\mathbf{A}_k^\top+\mathbf{Q}_k, \\
		\P_{k|k} &= \left(\P_{k|k-1}^{-1}+\H_{k,S_{k}}^\top\mathbf{R}_{k,S_{k}}^{-1}\H_{k,S_{k}}\right)^{-1},
	\end{aligned}
\end{equation*}
where 
$P_{0|0}=\mathbf{\Sigma}_x$. Since $\mathbf{R}_k=\mathrm{diag}(\sigma_1^2,\dots,\sigma_n^2)$ 
and the measurements are uncorrelated across sensors, it holds that
\begin{equation*}
	\begin{aligned}
		\P_{k|k} = \left(\P_{k|k-1}^{-1}+\H_{k,S_{k}}^\top\mathrm{diag}(\{\sigma_i^{-2}\}_{i\in S_{k}})\H_{k,S_{k}}\right)^{-1}.
	\end{aligned}
\end{equation*}
Furthermore, $\F_{S_k} = \P_{k|k}^{-1} = \P_{k|k-1}^{-1}+\sum_{i\in S_{k}}\sigma_i^{-2}\h_{k,i}
\h_{k,i}^\top$ is the corresponding Fisher information matrix. In the information form, the filtered 
estimator of $\x_k$ is expressed as
\begin{equation}\label{eq:estimator}
\begin{aligned}
\hat{\x}_{k|k} = 
\F_{S_k}^{-1} \H_{k,S_{k}}^\top\mathrm{diag}(\{\sigma_i^{-2}\}_{i\in S_{k}}) \y_k.
\end{aligned}
\end{equation}
The MSE of the estimate found in \ref{eq:estimator} is given by the trace of the filtered error covariance 
matrix $\P_{k|k}$:
\begin{equation}\label{eq:mse}
\begin{aligned}
\text{MSE}_{S_k} = \E\left[{\norm{\x_k-\hat{\x}_{k|k}}}_2^2\right]= \text{Tr}\left(\F_{S_k}^{-1}\right).
\end{aligned}
\end{equation}
To minimize \ref{eq:mse}, at each time step the fusion center seeks a solution to the optimization 
problem
\begin{equation}\label{eq:sensor1}
\begin{aligned}
& \underset{S}{\text{min}}
\quad \text{Tr}\left(\F_S^{-1}\right)
& \text{s.t.}\hspace{0.5cm}  S \subset [n], \phantom{k}|S|=K.
\end{aligned}
\end{equation}
By a reduction to the well-known set cover problem, the combinatorial optimization \ref{eq:sensor1}
can be shown to be NP-hard \cite{feige1998threshold,williamson2011design}. In principle, to find the 
optimal solution one needs to exhaustively search over all schedules of $K$ sensors. The 
techniques proposed in \cite{joshi2009sensor}, albeit for an optimality criterion different from MSE 
and a simpler measurement model, find a subset of sensors that yields a sub-optimal MSE 
performance while being computationally much more efficient than the exhaustive search. In particular,
\cite{joshi2009sensor} relies on finding the solution to the following SDP relaxation:
\begin{equation}\label{sdp}
\begin{aligned}
& \underset{\z_k,\Y}{\text{min}}
\quad \text{Tr}(\Y)\\
& \text{s.t.}\hspace{0.5cm}  0\leq z_{k,i} \leq 1, \phantom{k} \forall i\in [n]\\
& \hspace{0.9cm} \sum_{i=1}^n z_{k,i} =K\\
& \hspace{0.9cm} 
\begin{bmatrix}
\Y& \I\\
\I & \P_{k|k-1}^{-1}+\sum_{i=1}^n z_{k,i}\sigma_i^{-2}\h_{k,i}\h_{k,i}^\top
\end{bmatrix} \succeq \mathbf{0}.
\end{aligned}
\end{equation}
The complexity of the SDP algorithm scales as ${\cal O}(n^3)$ which is infeasible in many practical
settings. Furthermore, there are no guarantees on the achievable MSE performance of the SDP 
relaxation. Note that when the number of sensors in a network and the size of the state vector $\x_k$ 
are relatively large, even the greedy algorithm proposed in \cite{shamaiah2010greedy} may be 
computationally prohibitive. 
%%%%%%%%%%%%%%%%%%%%%%%%%%%%%%%%%%%%%%%%%%%%%%%%%%%%%%%%
%%%%%%%%%%%%%%%%%%%%%%%%%%%Algorithm%%%%%%%%%%%%%%%%%%%%
%%%%%%%%%%%%%%%%%%%%%%%%%%%%%%%%%%%%%%%%%%%%%%%%%%%%%%%%
\vspace{-0.2cm}
\section{Sensor Selection via Optimizing a Weak Submodular Objective}\label{sec:alg}
Leveraging the idea of {\it weak submodularity}, in this section we propose a new 
formulation of the sensor selection problem concerned with minimizing the 
MSE of the Kalman filter that relies on a subset of network nodes to track states
of a hidden random process. We first overview concepts that are essential for the 
development of the proposed framework.
\begin{definition}
	\label{def:unif}
	A set function $f:2^X\rightarrow \mathbb{R}$ is monotone non-decreasing if $f(S)\leq f(T)$ for all $S\subseteq T\subseteq X$.
\end{definition}
\begin{definition}
	\label{def:submod}
	A set function $f:2^X\rightarrow \mathbb{R}$ is submodular if 
	\begin{equation}
	f(S\cup \{j\})-f(S) \geq f(T\cup \{j\})-f(T)
	\end{equation}
	for all subsets $S\subseteq T\subset X$ and $j\in X\backslash T$.
\end{definition}

A concept closely related to submodularity is the notion of curvature of a set function. 
The curvature quantifies how close the function is to being submodular. In particular, here
we define the element-wise curvature.
\begin{definition}
The element-wise curvature ${\cal C}_l$ of a monotone non-decreasing function $f$ is defined as
\begin{equation}
{\cal C}_l=\max_{(S,T,i)\in \mathcal{X}_l}{f_i(T)\slash f_i(S)},
\end{equation}
where $f_i(S)=f(S\cup \{i\})-f(S)$ and $f_i(T)=f(T\cup \{i\})-f(T)$ denote the marginal 
values of adding element $i$ to sets $S$ and $T$, respectively, and
$\mathcal{X}_l = \{(S,T,i)|S \subset T \subset X, i\in X \backslash T, |T 
\backslash S|=l,|X|=n\}$. Furthermore, the maximum element-wise curvature is 
denoted by ${\cal C}_{\max}=\max_{l} {{\cal C}_l}$.
\end{definition}
A set function is submodular if and only if ${\cal C}_{\max} \le 1$. We refer to $f(S)$ 
as being weak submodular if its curvature ${\cal C}_{\max} > 1$ is bounded above.

\begin{definition}
Let $X$ be a finite set and let $\mathcal{I}$ denote a collection of subsets of $X$. 
The pair $\mathcal{M}=(X,\mathcal{I})$ is a matroid if the following two statements hold: 
\begin{itemize}
\item Hereditary property. If $T\in \mathcal{I}$, then $S\in \mathcal{I}$ for all $S\subseteq T$.
\item Augmentation property. If $S,T\in \mathcal{I}$ and $\abs{S}<\abs{T}$, then there exists $e\in T\backslash S$ such that $S\cup \{e\}\in \mathcal{I}$.
\end{itemize}
The collection $\mathcal{I}$ is called the set of independent sets of the matroid $\mathcal{M}$. A maximal independent set is a basis. It is easy to show that all the bases of a matroid have the same cardinality.
\end{definition}

Given a monotone non-decreasing set function $f:2^X\rightarrow \mathbb{R}$ with $f(\emptyset)=0$, and a uniform matroid $\mathcal{M}=(X,\mathcal{I})$, we
are interested in solving the combinatorial problem
\begin{equation}\label{eq:fmax}
\max_{S \in \mathcal{I}} f(S).
\end{equation}

%%%%%%%%%%%%%%%%%%%%%%%%%%%%%%%%%%%%%%%%%%%%%%%%%%%%%%%%%%%%%%%%%%%
%%%%%%%%%%%%%%%%%%%%%%%%%%%%%%%%%%%%%%%%%%%%%%%%%%%%%%%%%%%%%%%%%%%

Recall that for Kalman filtering in the resource constrained scenario, if $S_k$ is the set of 
sensors selected at time $k$ then the error covariance matrix of the filtered estimate is 
$\P_{k|k} = \F_{S_k}^{-1}$, the inverse of the corresponding Fisher information matrix. Let us 
define $f(S)$ as
\[
f(S) = \mathrm{Tr}\left(\P_{k|k-1}-\F_S^{-1}\right). 
\]
Clearly, since $\P_{k|k-1}$ is known, there is a one-to-one correspondence between $f(S_k)$ 
computed for a given subset of sensors $S_k$ and the MSE of the LMMSE estimator
(i.e., filtered estimate of the Kalman filter) that uses measurements acquired by the sensors 
in $S_k$. Therefore, we can express the optimization problem \ref{eq:sensor1} as
\begin{equation}\label{eq:sensor}
\begin{aligned}
& \underset{S}{\text{max}}
\quad f(S)
& \text{s.t.}\hspace{0.5cm}  S \subset [n], \phantom{k}|S|=K.
\end{aligned}
\end{equation}
We now argue that \ref{eq:sensor} is indeed an instance of the general combinatorial problem \ref{eq:fmax}.
%where $\F_S = \P_{k|k-1}^{-1}+\sigma^{-2}\sum_{i\in S}\h_i(t)\h_i(t)^\top$. 
By defining $X = [n]$ and $\mathcal{I} = \{S \subset X| |S|=K\}$, it is easy to see that $\mathcal{M}=(X,\mathcal{I})$ is a matroid.  In Proposition \oldref{p:mono} below we characterize important properties of $f(S)$ and develop a recursive scheme to efficiently compute the marginal gain of querying a sensor. The formula for the marginal gain of $f(S)$ is also of interest in our subsequent analysis of its weak submodularity properties.
\begin{proposition}\label{p:mono}
\textit{Let $f(S) = \mathrm{Tr}\left(\P_{k|k-1}-\F_S^{-1}\right)$. Then, $f(S)$ is a monotonically increasing set function, $f(\emptyset)=0$, and 
\begin{equation}\label{eq:mg}
f_j(S) = \frac{\h_{k,j}^\top\F_S^{-2}\h_{k,j}}{\sigma_j^{2}+\h_{k,j}^\top\F_S^{-1}\h_{k,j}},
\end{equation}
where upon adding element $j$ to $S$, $\F_{S}$ is updated according to
\begin{equation}\label{eq:upf}
\F_{S \cup\{j\}}^{-1} = \F_{S}^{-1}-\frac{\F_{S}^{-1}\h_{k,j}\h_{k,j}^\top\F_{S}^{-1}}{\sigma_j^2+\h_{k,j}^\top\F_{S}^{-1}\h_{k,j}}.
\end{equation}}
\end{proposition}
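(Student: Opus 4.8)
The plan is to establish the update formula \ref{eq:upf} first, since both the marginal-gain expression \ref{eq:mg} and the monotonicity claim follow from it with little additional work. Throughout I would rely on the fact that $\F_S \succ 0$ for every $S$: indeed $\P_{k|k-1}^{-1} \succ 0$ because $\P_{k|k-1}$ is a covariance matrix, and each term $\sigma_i^{-2}\h_{k,i}\h_{k,i}^\top \succeq 0$, so the sum defining $\F_S$ is positive definite and all the inverses below are well defined. The base case $f(\emptyset)=0$ is immediate: when $S=\emptyset$ the sum in $\F_S = \P_{k|k-1}^{-1}+\sum_{i\in S}\sigma_i^{-2}\h_{k,i}\h_{k,i}^\top$ is empty, so $\F_\emptyset^{-1}=\P_{k|k-1}$ and $f(\emptyset)=\mathrm{Tr}(\P_{k|k-1}-\P_{k|k-1})=0$.

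For the recursion \ref{eq:upf}, I would note that adjoining element $j$ to $S$ is the rank-one update $\F_{S\cup\{j\}} = \F_S + \sigma_j^{-2}\h_{k,j}\h_{k,j}^\top$, and apply the Sherman--Morrison identity $(\A+\u\v^\top)^{-1}=\A^{-1}-(\A^{-1}\u\v^\top\A^{-1})/(1+\v^\top\A^{-1}\u)$ with $\A=\F_S$, $\u=\sigma_j^{-2}\h_{k,j}$, and $\v=\h_{k,j}$. Cancelling the common factor $\sigma_j^{-2}$ from the numerator and denominator of the correction term then yields \ref{eq:upf} exactly.

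To obtain \ref{eq:mg}, I would write the marginal gain as $f_j(S)=f(S\cup\{j\})-f(S)=\mathrm{Tr}(\F_S^{-1}-\F_{S\cup\{j\}}^{-1})$, the constant $\P_{k|k-1}$ terms having cancelled. Substituting \ref{eq:upf} gives $f_j(S)=\mathrm{Tr}(\F_S^{-1}\h_{k,j}\h_{k,j}^\top\F_S^{-1})/(\sigma_j^2+\h_{k,j}^\top\F_S^{-1}\h_{k,j})$, and the cyclic invariance of the trace collapses the rank-one numerator to the scalar $\h_{k,j}^\top\F_S^{-2}\h_{k,j}$, which is \ref{eq:mg}. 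Monotonicity then drops out for free: since $\F_S^{-1}\succ 0$ and $\F_S^{-2}\succ 0$, the numerator of \ref{eq:mg} is nonnegative and the denominator is strictly positive (as $\sigma_j^2>0$), so $f_j(S)\ge 0$ for all $j$ and $S$; expressing any $T\supseteq S$ as $S$ with its extra elements adjoined one at a time and summing these nonnegative gains gives $f(S)\le f(T)$.

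This argument is almost entirely computational, so there is no serious obstacle; the only point that genuinely needs care is the positive-definiteness of $\F_S$, which I would establish once at the outset since it simultaneously guarantees that every inverse appearing in \ref{eq:mg} and \ref{eq:upf} exists and that the denominators are strictly positive.
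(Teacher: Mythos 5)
Your proof is correct and follows essentially the same route as the paper's: apply the Sherman--Morrison identity to the rank-one update $\F_{S\cup\{j\}}=\F_S+\sigma_j^{-2}\h_{k,j}\h_{k,j}^\top$, collapse the trace of the rank-one correction via cyclic invariance to obtain \ref{eq:mg}, and deduce monotonicity from positive definiteness of $\F_S$. The only difference is cosmetic ordering --- you isolate \ref{eq:upf} first and then derive \ref{eq:mg} from it, while the paper carries out both steps inside a single trace computation --- and your explicit remark that $\F_S\succ 0$ justifies all inverses is a small but welcome tightening.
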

\begin{proof}
See Appendix~\oldref{pf:mono}.
\end{proof}

As stated in Section~I, the MSE is not supermodular \cite{olshevsky2017non,krause2008near}. Consequently, the proposed objective $f(S) = \mathrm{Tr}\left(\P_{k|k-1}-\F_S^{-1}\right)$ is also not submodular. However, as we show 
in Theorem \oldref{thm:curv}, under certain conditions $f(S)$ is characterized by a bounded maximum element-wise curvature $\mathcal{C}_{\max}$. Theorem \oldref{thm:curv} also 
states a probabilistic theoretical upper bound on $\mathcal{C}_{\max}$ in scenarios where 
at each time step the measurement vectors $\h_{k,j}$'s are realizations of independent identically distributed (i.i.d.) random 
vectors drawn from a suitable distribution.

Before proceeding to Theorem \oldref{thm:curv} and its proof, we first state the matrix 
Bernstein inequality \cite{tropp2015introduction} and Weyl's inequality 
\cite{bellman1997introduction} which we will later use in the proof of Theorem \oldref{thm:curv}.
\begin{lemma}\label{thm:ber}
	(Matrix 
	Bernstein inequality \cite{tropp2015introduction}) Let $\{\X_\ell\}_{\ell=1}^n$ be a finite collection of independent, random, Hermitian matrices in $\R^{m\times m}$. Assume that for all $\ell \in [n]$,
	\begin{equation}\label{eq:ber}
	\E\left[\X_\ell\right] = \mathbf{0}, \quad \lambda_{max}(\X_\ell)\leq L.
	\end{equation}
	Let $\Y=\sum_{\ell=1}^n \X_\ell$. Then, for all $q>0$, it holds that
	\begin{equation}
	\Pr\{\lambda_{max}(\Y)\geq q \} \leq m\exp\left(\frac{-q^2\slash 2}{\|\E\left[\Y^2\right]\|+Lq\slash 3}\right).
	\end{equation} 
\end{lemma}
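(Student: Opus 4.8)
The plan is to use the matrix Laplace-transform (matrix Chernoff) method. First I would pass from the tail event on the top eigenvalue to an exponential-moment bound: for any $\theta>0$, Markov's inequality applied through the increasing map $t\mapsto e^{\theta t}$ gives
\[
\Pr\{\lambda_{max}(\Y)\geq q\}\leq e^{-\theta q}\,\E\!\left[e^{\theta\lambda_{max}(\Y)}\right].
\]
Because $\Y$ is Hermitian we have $e^{\theta\lambda_{max}(\Y)}=\lambda_{max}(e^{\theta\Y})$, and since $e^{\theta\Y}$ is positive definite its largest eigenvalue is dominated by its trace. Thus the whole problem reduces to controlling the scalar quantity $\E[\text{Tr}\,e^{\theta\Y}]$, the trace of the matrix moment generating function of the sum.

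The crux of the argument, and the step I expect to be the main obstacle, is that $\Y=\sum_\ell\X_\ell$ is a sum of \emph{noncommuting} random matrices, so $e^{\theta\Y}$ does not factor into a product of the per-summand exponentials. The remedy is Lieb's concavity theorem, which states that $\A\mapsto\text{Tr}\exp(\H+\log\A)$ is concave on positive-definite matrices for fixed Hermitian $\H$. Applying Jensen's inequality through this concave map and peeling off the independent summands one at a time yields the subadditivity of the matrix cumulant generating function,
\[
\E\!\left[\text{Tr}\,e^{\theta\Y}\right]\leq\text{Tr}\exp\!\left(\sum_{\ell=1}^n\log\E\!\left[e^{\theta\X_\ell}\right]\right).
\]
This is precisely the inequality that makes the bound dimension-dependent (through the eventual factor $m$) without requiring the summands to commute; in Tropp's treatment it supersedes the weaker Golden--Thompson argument of Ahlswede--Winter.

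It then remains to control each per-summand cumulant and assemble. Using $\E[\X_\ell]=\mathbf{0}$, I would write $\E[e^{\theta\X_\ell}]=\I+\E[e^{\theta\X_\ell}-\I-\theta\X_\ell]$, and invoke the scalar estimate $e^{t}-t-1\leq\tfrac{t^2/2}{1-t/3}$ (valid for $t<3$) to show, via the transfer rule for matrix functions and the one-sided bound $\lambda_{max}(\X_\ell)\leq L$, that $\E[e^{\theta\X_\ell}]\preceq\exp\!\big(g(\theta)\,\E[\X_\ell^2]\big)$ with $g(\theta):=(\theta^2/2)/(1-L\theta/3)$ for $\theta<3/L$. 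Hence $\log\E[e^{\theta\X_\ell}]\preceq g(\theta)\,\E[\X_\ell^2]$, and summing together with independence and the zero-mean property (which kills the cross terms, so that $\sum_\ell\E[\X_\ell^2]=\E[\Y^2]$) gives $\sum_\ell\log\E[e^{\theta\X_\ell}]\preceq g(\theta)\,\E[\Y^2]$. Monotonicity of the trace exponential under the Loewner order and the estimate $\text{Tr}\exp(\mathbf{M})\leq m\,\lambda_{max}(\exp(\mathbf{M}))$ then yield, with $v:=\norm{\E[\Y^2]}=\lambda_{max}\!\big(\sum_\ell\E[\X_\ell^2]\big)$,
\[
\Pr\{\lambda_{max}(\Y)\geq q\}\leq m\exp\!\big(-\theta q+g(\theta)\,v\big).
\]
Finally I would optimize the free parameter by taking $\theta=q/(v+Lq/3)$, which collapses the exponent to $-(q^2/2)/(v+Lq/3)$ and reproduces the stated inequality exactly. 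Throughout, the only genuinely hard ingredient is the noncommutative passage from $\E[\text{Tr}\,e^{\theta\sum_\ell\X_\ell}]$ to the product-free sum-of-logs form; every other step is the classical scalar Bernstein computation transported verbatim to Hermitian matrices.
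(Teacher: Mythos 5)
This lemma is not proved in the paper at all---it is imported verbatim from Tropp's monograph \cite{tropp2015introduction}, which is exactly the source whose proof your sketch reproduces (the matrix Laplace-transform method, Lieb's concavity theorem for the subadditivity of the matrix cumulant generating function, the per-summand Bernstein MGF bound $\log\E[e^{\theta\X_\ell}]\preceq g(\theta)\E[\X_\ell^2]$, and the choice $\theta = q/(v+Lq/3)$). Your outline is correct, including the use of independence and the zero-mean assumption to identify $\sum_\ell \E[\X_\ell^2]$ with $\E[\Y^2]$, so it matches the canonical proof that the paper's citation points to.
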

\begin{lemma}\label{lem:eig}
	(Weyl's inequality \cite{bellman1997introduction}) Let $\mathbf{A}$ and $\B$ be two $m\times m$ real positive definite matrices. Then it holds that
	\begin{equation}
	\lambda_{l}(\mathbf{A})+\lambda_{\min}(\B) \leq \lambda_{l}(\mathbf{A}+\B) \leq \lambda_{l}(\mathbf{A})+\lambda_{\max}(\B)
	\end{equation}
	where $ \lambda_{l}(\mathbf{A})  $ denotes the $l\ts{th}$ largest eigenvalue of $\mathbf{A}$.
\end{lemma}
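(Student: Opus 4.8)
The plan is to obtain both inequalities simultaneously from the Courant--Fischer min--max variational characterization of eigenvalues, which applies here because $\A$ and $\B$, being positive definite, are real symmetric and therefore possess orthonormal eigenbases and real eigenvalues. First I would record, for an arbitrary $m\times m$ symmetric matrix $\mathbf{M}$, the max--min form of its $l$th largest eigenvalue,
\begin{equation*}
\lambda_l(\mathbf{M}) = \max_{\dim V = l}\ \min_{\x\in V,\ \|\x\|=1}\ \x^\top \mathbf{M}\x,
\end{equation*}
with the outer maximization taken over all $l$-dimensional subspaces $V\subseteq\R^m$. Applying this to $\mathbf{M}=\A+\B$ reduces the statement to controlling the Rayleigh quotient $\x^\top(\A+\B)\x=\x^\top\A\x+\x^\top\B\x$ over unit vectors.

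The key step is the elementary two-sided Rayleigh bound for $\B$: from its spectral decomposition, every unit vector satisfies $\lambda_{\min}(\B)\le\x^\top\B\x\le\lambda_{\max}(\B)$. Substituting this into the quadratic form of $\A+\B$ gives, for each fixed $l$-dimensional subspace $V$, the chain
\begin{equation*}
\lambda_{\min}(\B)+\min_{\x\in V,\,\|\x\|=1}\x^\top\A\x \ \le\ \min_{\x\in V,\,\|\x\|=1}\x^\top(\A+\B)\x \ \le\ \lambda_{\max}(\B)+\min_{\x\in V,\,\|\x\|=1}\x^\top\A\x.
\end{equation*}
I would then take the maximum over all $l$-dimensional $V$ across the chain; because $\max_V$ is monotone and the additive constants $\lambda_{\min}(\B),\lambda_{\max}(\B)$ factor out of the maximization, the middle term becomes $\lambda_l(\A+\B)$ while both flanks become $\lambda_l(\A)$ shifted by the respective constant, yielding exactly $\lambda_l(\A)+\lambda_{\min}(\B)\le\lambda_l(\A+\B)\le\lambda_l(\A)+\lambda_{\max}(\B)$.

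The hard part is not computational but bookkeeping: one must be sure to use the max--min (rather than min--max) convention consistent with ``$l$th largest,'' and verify that propagating the per-subspace inequalities through the outer maximum preserves their direction, which follows since the bounds hold uniformly in $V$. Since this is the classical Weyl inequality, an equally acceptable route is simply to invoke \cite{bellman1997introduction}; the variational argument above, however, makes the lemma self-contained in a few lines and needs nothing beyond the Rayleigh-quotient bounds.
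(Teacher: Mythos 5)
Your proposal is correct, but it should be noted that the paper does not actually prove this lemma at all: Lemma~\oldref{lem:eig} is quoted as a known classical fact, with the citation to \cite{bellman1997introduction} standing in for the proof, and the paper then uses it as a black box inside the proof of Theorem~\oldref{thm:curv}. Your argument supplies the missing derivation, and it is the standard one: by Courant--Fischer, $\lambda_l(\A+\B)=\max_{\dim V=l}\min_{\x\in V,\|\x\|=1}\x^\top(\A+\B)\x$, the pointwise Rayleigh bounds $\lambda_{\min}(\B)\leq\x^\top\B\x\leq\lambda_{\max}(\B)$ hold uniformly over unit vectors and hence pass through the inner minimum on each fixed subspace $V$, and the resulting additive constants factor out of the outer maximum, giving both inequalities simultaneously. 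Each step is sound, including the bookkeeping you flag: with the ``$l$th largest'' convention the max--min form is the right one, and monotonicity of $\min$ and $\max$ under pointwise domination preserves the direction of the inequalities. One small remark: positive definiteness of $\A$ and $\B$ is never used in your argument --- real symmetry suffices for Courant--Fischer and the Rayleigh bounds --- so you in fact prove a slightly more general statement than the lemma claims; this is harmless, since the paper only needs it for the positive definite matrices $\F_S$ and sums of $\sigma_j^{-2}\h_{k,j}\h_{k,j}^\top$. Given that the paper treats the lemma as citation-level, simply invoking \cite{bellman1997introduction} would also have been acceptable, but your few-line variational proof makes it self-contained at essentially no cost.
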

%%%%%%%%%%%%%%%%%%%%%%%%%%%%%%%%%%%%%%%%%%%%%%%%%%%%%%%%%%%%%%%%%%%%%%%%%%
We now proceed to the statement and proof of Theorem \oldref{thm:curv}.
\begin{theorem}\label{thm:curv}
Let $\mathcal{C}_{max}$ be the maximum element-wise curvature of $f(S)$, the objective 
function of the sensor selection problem. Assume that $\|\h_{k,j}\|_2^2\leq C$ for all $j$ 
and $k$. If 
\begin{equation}\label{eq:fstcond}
\lambda_{max}(\H_k^\top\H_k) \leq \left(\frac{1}{\phi} -\frac{1}{\lambda_{min}(\P_{k|k-1})}\right)\min_{j\in[n]}\sigma_j^2
\end{equation}
for some $0<\phi< \lambda_{min}(\P_{k|k-1})$, then it holds that 
\begin{equation}\label{eq:phic}
\mathcal{C}_{max} \leq\max_{j\in[n]} \frac{\lambda_{max}(\P_{k|k-1})^{2}(\sigma_j^{2}+
\lambda_{max}(\P_{k|k-1})C)}{\phi^{2}(\sigma_j^{2}+\phi C)}.
\end{equation}
Furthermore, if $\h_{k,j}$'s are i.i.d. 
zero-mean random vectors with covariance matrix $\sigma_h^2\I_m$ such that 
$\sigma_h^2<C$, then for all $q>0$, with probability 
	\begin{equation} \label{eq:psuc}
	p\geq 1-m\exp\left(\frac{-q^2\slash 2}{(C-\sigma_h^2)(n\sigma_h^2+q\slash 3)}\right),
	\end{equation}
    it holds that
	\begin{equation}\label{eq:phi}
	\phi = \min_{j\in[n]} \left(\frac{1}{\lambda_{min}(\P_{k|k-1})}+\frac{n\sigma_h^2+q}{\sigma_j^2}\right)^{-1}>0.
	\end{equation}
\end{theorem}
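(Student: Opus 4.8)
The plan is to bound the ratio $f_j(T)/f_j(S)$ that defines $\mathcal{C}_l$ directly, using the closed-form marginal gain from Proposition \oldref{p:mono}. Substituting \ref{eq:mg} and regrouping,
\begin{equation*}
\frac{f_j(T)}{f_j(S)} = \frac{\h_{k,j}^\top\F_T^{-2}\h_{k,j}}{\h_{k,j}^\top\F_S^{-2}\h_{k,j}}\cdot\frac{\sigma_j^2+\h_{k,j}^\top\F_S^{-1}\h_{k,j}}{\sigma_j^2+\h_{k,j}^\top\F_T^{-1}\h_{k,j}}.
\end{equation*}
I would then sandwich every quadratic form between its extreme Rayleigh quotients, e.g. $\|\h_{k,j}\|^2/\lambda_{max}(\F_S)^2\le\h_{k,j}^\top\F_S^{-2}\h_{k,j}$ and $\h_{k,j}^\top\F_T^{-2}\h_{k,j}\le\|\h_{k,j}\|^2/\lambda_{min}(\F_T)^2$. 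This is what lets me avoid the fact that $\F\mapsto\F^{-2}$ is not operator monotone, so the Loewner ordering $\F_S\preceq\F_T$ cannot be applied to the squared terms.

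The crux is a uniform control of the spectra of $\F_S$ and $\F_T$ over all feasible $S\subset T$. Since $\F_S=\P_{k|k-1}^{-1}+\sum_{i\in S}\sigma_i^{-2}\h_{k,i}\h_{k,i}^\top\succeq\P_{k|k-1}^{-1}$, I immediately get $\lambda_{min}(\F_S)\ge 1/\lambda_{max}(\P_{k|k-1})$, and likewise for $\F_T$. For the upper bound I use $\sum_{i\in S}\sigma_i^{-2}\h_{k,i}\h_{k,i}^\top\preceq(\min_j\sigma_j^2)^{-1}\H_k^\top\H_k$ together with Weyl's inequality (Lemma \oldref{lem:eig}); hypothesis \ref{eq:fstcond} is then exactly what forces $\lambda_{max}(\F_S)\le 1/\phi$, and the same holds for $\F_T$. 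Feeding these four eigenvalue bounds into the two factors makes the first factor at most $\lambda_{max}(\P_{k|k-1})^2/\phi^2$. For the second factor I write it as $(\sigma_j^2+ax)/(\sigma_j^2+bx)$ with $x=\|\h_{k,j}\|^2$, $a=1/\lambda_{min}(\F_S)$ and $b=1/\lambda_{max}(\F_T)$; since $\F_S\preceq\F_T$ gives $a\ge b$, this is nondecreasing in $x$, so its supremum over $\|\h_{k,j}\|^2\le C$ is attained at $x=C$. Bounding $a\le\lambda_{max}(\P_{k|k-1})$ and $b\ge\phi$ then yields $(\sigma_j^2+\lambda_{max}(\P_{k|k-1})C)/(\sigma_j^2+\phi C)$. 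Multiplying the two factors and maximizing over $j$ produces \ref{eq:phic}.

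For the probabilistic claim I would first simplify the hypothesis under the proposed $\phi$. From \ref{eq:phi} one has $1/\phi-1/\lambda_{min}(\P_{k|k-1})=(n\sigma_h^2+q)/\min_j\sigma_j^2$, so condition \ref{eq:fstcond} collapses to the single scalar event $\lambda_{max}(\H_k^\top\H_k)\le n\sigma_h^2+q$. Writing $\H_k^\top\H_k=\sum_{j=1}^n\h_{k,j}\h_{k,j}^\top$ and centering by its mean $n\sigma_h^2\I$, I set $\X_j=\h_{k,j}\h_{k,j}^\top-\sigma_h^2\I$, which are independent, zero-mean, Hermitian, with $\lambda_{max}(\X_j)=\|\h_{k,j}\|^2-\sigma_h^2\le C-\sigma_h^2$. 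The remaining ingredient is the variance proxy: using $\|\h_{k,j}\|^2\le C$ I bound $\E[\X_j^2]=\E[\|\h_{k,j}\|^2\h_{k,j}\h_{k,j}^\top]-\sigma_h^4\I\preceq\sigma_h^2(C-\sigma_h^2)\I$, whence $\|\E[(\sum_j\X_j)^2]\|\le n\sigma_h^2(C-\sigma_h^2)$. Matrix Bernstein (Lemma \oldref{thm:ber}) with $L=C-\sigma_h^2$ then returns exactly the tail in \ref{eq:psuc}, and $0<\phi<\lambda_{min}(\P_{k|k-1})$ is read off directly from \ref{eq:phi}.

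The main obstacle is the uniform upper bound $\lambda_{max}(\F_S),\lambda_{max}(\F_T)\le 1/\phi$: it is the single estimate that simultaneously produces the $\phi^2$ in the first factor and the $\phi C$ term in the denominator of the second, and it is precisely where hypothesis \ref{eq:fstcond} must be spent. A related subtlety is that $\|\h_{k,j}\|^2$ is shared between the numerator and denominator of the second factor, so one cannot independently plug in $\|\h_{k,j}\|^2\le C$ above and $\|\h_{k,j}\|^2\ge 0$ below; the monotonicity-in-$x$ argument is what legitimately sets both copies to $C$ and delivers the stated denominator $\sigma_j^2+\phi C$.
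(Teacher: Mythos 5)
Your proposal is correct and follows essentially the same route as the paper's proof: Rayleigh-quotient bounds on the marginal-gain ratio from Proposition \oldref{p:mono}, Weyl/Loewner bounds giving $\lambda_{min}(\F_S)\geq 1/\lambda_{max}(\P_{k|k-1})$ and the hypothesis \ref{eq:fstcond} spent on the uniform bound $\lambda_{max}(\F_S),\lambda_{max}(\F_T)\leq 1/\phi$, the monotonicity-in-$\|\h_{k,j}\|_2^2$ argument to set $x=C$, and matrix Bernstein applied to the same centered matrices $\X_j=\h_{k,j}\h_{k,j}^\top-\sigma_h^2\I_m$ with $L=C-\sigma_h^2$ and variance proxy $n\sigma_h^2(C-\sigma_h^2)$. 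The only differences are cosmetic (you bound $\lambda_{max}(\F_S)$ directly rather than through $\lambda_{max}(\F_{[n]})$, and you reduce \ref{eq:fstcond} to the scalar event $\lambda_{max}(\H_k^\top\H_k)\leq n\sigma_h^2+q$ before invoking Bernstein, which is a slightly cleaner packaging of the same argument).
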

%%%%%%%%%%%%%%%%%%%%%%%%%%%%%%%%%%%%%%%%%%%%%%%%%%%%%%%%%%%%%%%%%%%%%%%%%%
\begin{proof}
We prove the statement of the theorem by relying on the recursive expression for the marginal 
gain stated in Proposition~1. We first establish a sufficient condition for weak submodularity of 
$f(S)$. In particular, from the definition of the element-wise curvature and \ref{eq:mg}, for all 
$(S,T,j)\in \mathcal{X}_l$ we obtain
\begin{equation}
\begin{aligned}
{\cal C}_l&=\max_{(S,T,j)\in \mathcal{X}_l}{\frac{(\h_{k,j}^\top\F_T^{-2}\h_{k,j})(\sigma_j^{2}+\h_{k,j}^\top\F_S^{-1}\h_{k,j})}{(\h_{k,j}^\top\F_S^{-2}\h_{k,j})(\sigma_j^{2}+\h_{k,j}^\top\F_T^{-1}\h_{k,j})}}\\\vspace{0.2cm}
&\leq \max_{(S,T,j)\in \mathcal{X}_l}{\frac{\lambda_{max}(\F_T^{-2})(\sigma_j^{2}+\lambda_{max}(\F_S^{-1})\|\h_{k,j}\|_2^2)}{\lambda_{min}(\F_S^{-2})(\sigma_j^{2}+\lambda_{min}(\F_T^{-1})\|\h_{k,j}\|_2^2)}},
\end{aligned}
\end{equation}
where the inequality follows from the Courant–Fischer min-max theorem \cite{bellman1997introduction}. Notice that $\lambda_{max}(\F_S^{-1})=\lambda_{min}(\F_S)^{-1}$ and $\lambda_{min}(\F_T)\geq\lambda_{min}(\F_S)\geq\lambda_{min}(\F_\emptyset)=\lambda_{min}(\P_{k|k-1}^{-1})$ by Lemma \oldref{lem:eig}. This fact, along with the definition of $\mathcal{C}_{max}$ implies
	\begin{equation}\label{eq:ray}
	\begin{aligned}
	\mathcal{C}_{max} &\leq\max_{j \in[n]} \frac{\lambda_{max}(\P_{k|k-1})^{2}(\sigma_j^{2}+\lambda_{max}(\P_{k|k-1})\|\h_{k,j}\|_2^2)}{\lambda_{max}(\F_S)^{-2}(\sigma_j^{2}+\lambda_{max}(\F_T)^{-1}\|\h_{k,j}\|_2^2)}\\\vspace{0.2cm}
	&\stackrel{(a)}{\leq}\max_{j \in[n]} \frac{\lambda_{max}(\P_{k|k-1})^{2}(\sigma_j^{2}+\lambda_{max}(\P_{k|k-1})\|\h_{k,j}\|_2^2)}{\lambda_{max}(\F_{[n]})^{-2}(\sigma_j^{2}+\lambda_{max}(\F_{[n]})^{-1}\|\h_{k,j}\|_2^2)}\\\vspace{0.2cm}
	&\stackrel{(b)}{\leq}\max_{j \in[n]} \frac{\lambda_{max}(\P_{k|k-1})^{2}(\sigma_j^{2}+\lambda_{max}(\P_{k|k-1})C)}{\lambda_{max}(\F_{[n]})^{-2}(\sigma_j^{2}+\lambda_{max}(\F_{[n]})^{-1}C)},
	\end{aligned}
	\end{equation}
	where (a) follows from the fact that $\lambda_{max}(\F_S)\leq\lambda_{max}(\F_T)\leq\lambda_{max}(\F_{[n]})$ and (b) holds since
	\begin{equation}
	g(x) = \frac{\sigma_j^{2}+\lambda_{max}(\P_{k|k-1})x}{\sigma_j^{2}+\lambda_{max}(\F_{[n]})^{-1}x}
	\end{equation} 
	is a monotonically increasing function for $x>0$. Now, since the maximum eigenvalue of a positive definite matrix satisfies the triangle inequality, we have
	\begin{equation}
	\begin{aligned}
	\lambda_{max}(\F_{[n]})&\leq \frac{1}{\lambda_{min}(\P_{k|k-1})}+\lambda_{max}(\sum_{j=1}^n \frac{1}{\sigma_j^2}\h_{k,j}\h_{k,j}^\top)\\
	&\leq\frac{1}{\lambda_{min}(\P_{k|k-1})}+\max_{j\in[n]}\frac{1}{\sigma_j^2}\lambda_{max}(\H_k^\top\H_k).
	\end{aligned}
	\end{equation}
Therefore, by combining inequalities \ref{eq:fstcond} and \ref{eq:ray} we obtain the result in
\ref{eq:phic}. 

Next, to analyze the setting of i.i.d random measurement vectors, 
we bound $\lambda_{max}(\F_{[n]})$ using Lemma \oldref{thm:ber}.
	Let $\X_j=\h_{k,j}\h_{k,j}^\top-\sigma_h^2\I_m$ and $\Y=\sum_{j=1}^n \X_j$. To use the result of Lemma \oldref{thm:ber}, one should first verify expressions in \ref{eq:ber}. To this end, note that
	\begin{equation}
	\begin{aligned}
	\E[\X_j]&= \E[\h_{k,j}\h_{k,j}^\top-\sigma_h^2\I_m] \\
	&=\E[\h_{k,j}\h_{k,j}^\top] -\sigma_h^2\I_m =\mathbf{0}. 
	\end{aligned}
	\end{equation}
	This  in turn  implies that $\E[\Y]=\mathbf{0}$.
	Since $\X_j$'s are independent,
	\begin{equation}
	\|\E[\Y^2]\|= \|\E[\sum_{j=1}^n \X_j^2]\|\leq \sum_{j=1}^n \|\E[\X_j^2]\|
	\end{equation}
	by the linearity of expectation and the triangle inequality. To proceed, we need to 
	determine $\lambda_{max}(\X_j)$ and $\E[\X_j^2]$. First, let us verify $\h_{k,j}$ is an 
	eigenvector of $\X_j$ by observing that
	\begin{equation}
	\begin{aligned}
	\X_j\h_j &= \left(\h_{k,j}\h_{k,j}^\top-\sigma_h^2\I_m\right)\h_{k,j}\\
	& = \left(\|\h_{k,j}\|_2^2-\sigma_h^2\right)\h_{k,j},
	\end{aligned}
	\end{equation} 
	where $\h_{k,j}\h_{k,j}^\top-\sigma_h^2\I_m$ is the corresponding eigenvalue. 
	Since $\h_{k,j}\h_{k,j}^\top$ is a rank-1 matrix, other eigenvalues of $\X_j$ are all equal 
	to $-\sigma_h^2$. Hence, 
	\begin{equation}
	\lambda_{max}(\X_j) \leq C-\sigma_h^2, 
	\end{equation}
	and we recall that $C-\sigma_h^2 > 0$. We can now establish an upper bound on $\E[\X_j^2]$ as
	% $\lambda_{max}(\X_j)=\|\h_{k,j}\|_2^2-\sigma_h^2\leq C-\sigma_h^2$, and $\E[\X_j^2]\preceq (C-\sigma_h^2)\sigma_h^2\I_m$. 
	\begin{equation}
	\begin{aligned}
	\E[\X_j^2] &= \E[\left(\h_{k,j}\h_{k,j}^\top-\sigma_h^2\I_m\right)\left(\h_{k,j}\h_{k,j}^\top-\sigma_h^2\I_m\right)]\\
	& = \left(\|\h_{k,j}\|_2^2-\sigma_h^2\right)\E[ \h_{k,j}\h_{k,j}^\top] \\
	& \quad - \sigma_h^2\E[ \left(\h_{k,j}\h_{k,j}^\top-\sigma_h^2\I_m\right)]\\
	& = \left(\|\h_{k,j}\|_2^2-\sigma_h^2\right)\sigma_h^2\I_m\preceq (C-\sigma_h^2)\sigma_h^2\I_m,
	\end{aligned}
	\end{equation}
	where we have used the fact that $\E[\X_j]=\mathbf{0}$. Thus, $L =C-\sigma_h^2$ and $\|\E[\Y^2]\| \leq n(C-\sigma_h^2)\sigma_h^2$. Now, according to Lemma \oldref{thm:ber}, for all $q>0$ it holds that $\Pr\{\lambda_{max}(\Y)\leq q\}\geq p$ where % $p$ satisfies the bound in \ref{eq:psuc}.
	\begin{equation}\label{eq:prob}
	p= 1-m\exp\left(\frac{-q^2\slash 2}{(C-\sigma_h^2)(n\sigma_h^2+q\slash 3)}\right).
	\end{equation}
	Therefore, 
	\begin{equation}
	\lambda_{max}(\F_{[n]})\leq \frac{1}{\lambda_{min}(\P_{k|k-1})}+\max_{j\in[n]}\frac{n\sigma_h^2+q}{\sigma_j^2}= \phi^{-1}
	\end{equation}
	with probability $p$. This completes the proof.
\end{proof}
%%%%%%%%%%%%%%%%%%%%%%%%%%%%%%%%%%%%%%%%%%%%%%%%%%%%%%%%%%%%%%%%%%%%%%%%%%
\textit{Remark 1:} The setting of i.i.d. random vectors described in Theorem \oldref{thm:curv} arises 
in scenarios where sketching techniques, such as random projections, are used to reduce dimensionality 
of the measurement equation (see \cite{berberidis2016data} for more details). The following are often encountered examples of such settings:
\begin{enumerate}
\item {\it Multivariate Gaussian measurement vectors:} Let $\h_{k,j} \sim \mathcal{N}(0,\frac{1}{m}\I_m)$ 
for all $j$. It is straightforward to show that $\E[\|\h_{k,j}\|_2^2] = 1$. Furthermore, it can be shown that 
$\|\h_{k,j}\|_2^2$ is with high probability concentrated around its expected value. Therefore, for this case,
$\sigma_h^2 = \frac{1}{m}$ and $C = 1$.
\item {\it Centered Bernoulli measurement vectors:} Let each entry of $\h_{k,j}$ be $\pm \frac{1}{\sqrt{m}}$ 
with equal probability. Therefore, $\|\h_{k,j}\|_2^2=1 = C $. Additionally, $\sigma_h^2 = \frac{1}{m}$ since 
the entries of $\h_{k,j}$ are i.i.d. zero-mean random variables with variance $\frac{1}{m}$.
	%\item {\it Deterministic orthonormal measurement vectors: }
\end{enumerate}

We can interpret the conditions stated in Theorem \oldref{thm:curv} as requirements on the 
condition number of  $\P_{k|k-1}$ as argued next. For a sufficiently large $m$ and 
$\sigma_h^2 = \frac{1}{m}$, it holds that $C \approx 1$. Assume $\phi \geq \lambda_{max}(\P_{k|k-1})\slash \Delta$ for some $\Delta>1$, and $\sigma_j^2 = \sigma^2$ for all $i\in[n]$. Define
\begin{equation}
\mathrm{SNR} = \frac{\lambda_{max}(\P_{k|k-1})}{\sigma^2},  
\end{equation}
and let
\begin{equation}
\kappa = \frac{\lambda_{max}(\P_{k|k-1})}{\lambda_{min}(\P_{k|k-1})} \geq 1 
\end{equation}
be the condition number of $\P_{k|k-1}$. Then, following some elementary numerical approximations, 
we obtain the following corollary.
\begin{corollary}\label{col:con}
Let	
\begin{equation}
\Delta \geq \kappa +c_1\frac{n}{m}\mathrm{SNR}
\end{equation}
for some $c_1>1$. Then with probability 
\begin{equation}
p\geq 1-m\exp(-\frac{n}{m}c_2)
\end{equation}
it holds that $\mathcal{C}_{max} \leq \Delta^3$ for some $c_2>0$.
\end{corollary}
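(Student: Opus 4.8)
The plan is to combine the two bounds that Theorem~\oldref{thm:curv} already supplies — the deterministic curvature bound \ref{eq:phic} and the high-probability guarantee \ref{eq:phi}–\ref{eq:psuc} on $\phi$ — specialized to the regime $\sigma_j^2=\sigma^2$, $\sigma_h^2=1/m$, and $C\approx 1$ described just before the corollary. The only genuinely new content is an appropriate choice of the free Bernstein parameter $q$; everything else is bookkeeping.

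First I would dispatch the deterministic half. Writing $\lambda:=\lambda_{max}(\P_{k|k-1})$ and inserting the standing hypothesis $\phi\ge\lambda/\Delta$ into \ref{eq:phic}, note that the right-hand side is decreasing in $\phi$, so the worst case is $\phi=\lambda/\Delta$. With $\sigma_j^2=\sigma^2$ and $C\approx 1$ the bound then collapses to
\[
\mathcal{C}_{max}\le \Delta^2\,\frac{\sigma^2+\lambda}{\sigma^2+\lambda/\Delta}=\Delta^2\,\frac{1+\mathrm{SNR}}{1+\mathrm{SNR}/\Delta}.
\]
To conclude $\mathcal{C}_{max}\le\Delta^3$ it suffices to show $\tfrac{1+\mathrm{SNR}}{1+\mathrm{SNR}/\Delta}\le\Delta$, which rearranges to $1\le\Delta$ and therefore holds automatically. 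Thus the cubic bound is a free consequence of the single scalar inequality $\phi\ge\lambda/\Delta$.

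The substantive step is to certify $\phi\ge\lambda/\Delta$ with the advertised probability. Substituting $\sigma_j^2=\sigma^2$ and $\sigma_h^2=1/m$ into \ref{eq:phi} and multiplying the requirement $\phi\ge\lambda/\Delta$ through by $\lambda$, I find it equivalent to $\kappa+\mathrm{SNR}\,(n/m+q)\le\Delta$. The hypothesis $\Delta\ge\kappa+c_1\frac{n}{m}\mathrm{SNR}$ lets me meet this by choosing $q=(c_1-1)\frac{n}{m}$, which is strictly positive precisely because $c_1>1$; this is the one place where the assumption on $\Delta$ is actually consumed.

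Finally I would propagate $q=(c_1-1)n/m$ through the probability bound \ref{eq:psuc} together with $\sigma_h^2=1/m$ and $C-\sigma_h^2\approx 1$. The exponent becomes
\[
\frac{-q^2/2}{(C-\sigma_h^2)(n\sigma_h^2+q/3)}\approx-\frac{n}{m}\,\frac{3(c_1-1)^2}{2(c_1+2)},
\]
giving $p\ge 1-m\exp(-\tfrac{n}{m}c_2)$ with $c_2=\tfrac{3(c_1-1)^2}{2(c_1+2)}>0$, exactly as claimed. I expect the only real care to lie in the bookkeeping of the $C\approx 1$ (equivalently $\sigma_h^2=1/m$, $m$ large) approximations — these are the ``elementary numerical approximations'' alluded to in the text — and the clean cancellations above suggest they are benign rather than the crux of the argument.
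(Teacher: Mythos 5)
Your proposal is correct and follows exactly the route the paper intends: the paper gives no explicit proof of Corollary \oldref{col:con}, deferring to "elementary numerical approximations" of Theorem \oldref{thm:curv}, and your argument — monotonicity of the bound \ref{eq:phic} in $\phi$, the choice $q=(c_1-1)\frac{n}{m}$ to certify $\phi\ge\lambda_{max}(\P_{k|k-1})/\Delta$ under the hypothesis on $\Delta$, and propagation through \ref{eq:psuc} to get $c_2=\frac{3(c_1-1)^2}{2(c_1+2)}$ — is precisely the omitted bookkeeping. The derivation checks out at every step, including the reduction of $\mathcal{C}_{max}\le\Delta^3$ to the trivial inequality $1\le\Delta$.
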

Informally, Theorem \oldref{thm:curv} states that for a well-conditioned $\P_{k|k-1}$ the curvature of 
$f(S)$ is small, which implies weak submodularity of $f(S)$. Furthermore, the probability of such an
event exponentially increases with the number of available measurements. 

%%%%%%%%%%%%%%%%%%%%%%%%%%%%%%%%%%%%%%%%%%%%%%%%%%%%%%%%
%%%%%%%%%%%%%%%%%%%%%%%%%%%Analysis%%%%%%%%%%%%%%%%%%%%
%%%%%%%%%%%%%%%%%%%%%%%%%%%%%%%%%%%%%%%%%%%%%%%%%%%%%%%%
\vspace{-0.2cm}
\section{Randomized Greedy Sensor Selection}\label{sec:anl}
%=================================== ALGORITHM 1
\renewcommand\algorithmicdo{}	% removes "DO" from for loops
\begin{algorithm}[t]
	\caption{Randomized Greedy Sensor Scheduling}
	\label{alg:greedy}
	\begin{algorithmic}[1]
		\STATE \textbf{Input:}  $\P_{k|k-1}$, $\H_k$, $K$, $\epsilon$.
		\STATE \textbf{Output:} Subset $S_k\subseteq [n]$ with $|S_k|=K$.
		\STATE Initialize $S_k^{(0)} =  \emptyset$, $\F_{S_k^{(0)}}^{-1}=\P_{k|k-1}$.
		\FOR{$i = 0,\dots, K-1$}
		\STATE Choose $R$ by sampling $s=\frac{n}{K}\log{(1/\epsilon)}$ indices uniformly at random from $[n]\backslash S_k^{(i)}$.\vspace{0.2cm}
		\STATE $i_s = \argmax_{j\in R} \frac{\h_{k,j}^\top\F_{S_k^{(i)}}^{-2}\h_{k,j}}{\sigma_j^{2}+\h_{k,j}^\top\F_{S_k^{(i)}}^{-1}\h_{k,j}}$.\vspace{0.2cm}
		\STATE Set $S_k^{(i+1)}= S_k^{(i)}\cup \{i_s\}$.\vspace{0.2cm}
		\STATE $\F_{S_k^{(i+1)}}^{-1} = \F_{S_k^{(i)}}^{-1}-\frac{\F_{S_k^{(i)}}^{-1}\h_{k,i_s}\h_{k,i_s}^\top\F_{S_k^{(i)}}^{-1}}{\sigma_j^2+\h_{k,i_s}^\top\F_{S_k^{(i)}}^{-1}\h_{k,i_s}}$
		\ENDFOR
		\RETURN $S_k = S_k^{(K)}$.
	\end{algorithmic}
\end{algorithm}
%===================================

The complexity of SDP relaxation and greedy algorithms for sensor selection become
prohibitive in large-scale systems. Motivated by the need for practically feasible 
schemes, we present a randomized greedy algorithm for finding an approximate solution 
to \ref{eq:sensor} and derive its performance guarantees. In particular, inspired by the 
technique in \cite{mirzasoleiman2014lazier} proposed in the context of optimizing 
submodular objective functions, we develop a computationally efficient randomized 
greedy algorithm (see Algorithm \oldref{alg:greedy}) that finds an approximate solution 
to \ref{eq:sensor} with a guarantee on the achievable MSE performance of the Kalman 
filter that uses only the observations of the selected sensors. Algorithm \oldref{alg:greedy} 
performs the task of sensor scheduling in the following way. At each iteration of the 
algorithm, a subset $R$ of size $s$ is sampled uniformly at random and without 
replacement from the set of available sensors. The marginal gain provided by each of 
these $s$ sensors to the objective function is computed using \ref{eq:mg}, and the one 
yielding the highest marginal gain is added to the set of selected sensors. Then the 
efficient recursive formula in \ref{eq:upf} is used to update $\F_{S}^{-1}$ so it can 
be analyzed when making the selection in the next iteration. This procedure is 
repeated $K$ times.

\textit{Remark 2:} The parameter $\epsilon$ in Algorithm \oldref{alg:greedy}, $e^{-K}
\leq\epsilon<1$, is a predefined constant that is chosen to strike a desired balance 
between performance and complexity. When $\epsilon = e^{-K}$, each iteration includes 
all of the non-selected sensors in $R$ and Algorithm \oldref{alg:greedy} coincides with 
the conventional greedy scheme. However, as $\epsilon$ approaches $1$, $|R|$ and 
thus the overall computational complexity decreases. 
%%%%%%%%%%%%%%%%%%%%%%%%%%%%%%%%%%%%%%%%%%%%%%%%%%%%%%%%%%
\vspace{-0.2cm}
\subsection{Performance analysis of the proposed scheme}
In this section we analyze Algorithm \oldref{alg:greedy} and in Theorem \oldref{thm:card} 
provide a bound on the performance of the proposed randomized greedy scheme when 
applied to finding an approximate solution to maximization problem \ref{eq:sensor}.

Before deriving the main result, we first provide two lemmas. Lemma \oldref{lem:curv} 
states an upper bound on the difference between the values of the objective function 
corresponding to two sets having different cardinalities while Lemma \oldref{lem:rand} 
provides a lower bound on the expected marginal gain.

%%%%%%%%%%%%%%%%%%%%%%%%%%%%%%%%%%%%%%%%%%%%%%%%%%%%%%%%%%%%%%%%%%%%%%%%%%
\begin{lemma}\label{lem:curv}
\textit{Let $\{{\cal C}_l\}_{l=1}^{n-1}$ denote the element-wise curvatures of $f(S)$. Let 
$S$ and $T$ be any subsets of sensors such that $S\subset T \subseteq [n]$ with 
$|T\backslash S|=r$. Then it holds that
\begin{equation}
f(T)-f(S)\leq  C(r)\sum_{j\in T\backslash S}f_j(S),
\end{equation}
where $ C(r)=\frac{1}{r}(1+\sum_{l=1}^{r-1}{\cal C}_l)$.}
\end{lemma}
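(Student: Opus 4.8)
The plan is to write $f(T)-f(S)$ as a telescoping sum of single-element marginal gains along a chain from $S$ to $T$, bound each gain by the element-wise curvature of the appropriate order, and then remove the dependence on the chosen ordering by averaging over all orderings of the added elements.

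First I would enumerate $T\backslash S=\{j_1,\dots,j_r\}$ and, for an arbitrary permutation $\pi$ of $[r]$, form the chain $S=S_0^\pi\subset S_1^\pi\subset\cdots\subset S_r^\pi=T$ with $S_t^\pi=S\cup\{j_{\pi(1)},\dots,j_{\pi(t)}\}$. Telescoping then gives $f(T)-f(S)=\sum_{t=1}^r f_{j_{\pi(t)}}(S_{t-1}^\pi)$, an identity that holds for every $\pi$ since its left-hand side is independent of the ordering. To bound the individual terms I note that $|S_{t-1}^\pi\backslash S|=t-1$ and $j_{\pi(t)}\in[n]\backslash S_{t-1}^\pi$, so the triple $(S,S_{t-1}^\pi,j_{\pi(t)})$ lies in $\mathcal{X}_{t-1}$; by the definition of the element-wise curvature this yields $f_{j_{\pi(t)}}(S_{t-1}^\pi)\leq {\cal C}_{t-1}\, f_{j_{\pi(t)}}(S)$, where the $t=1$ term carries the factor ${\cal C}_0=1$ (it is in fact an equality, since $S_0^\pi=S$). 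Hence $f(T)-f(S)\leq\sum_{t=1}^r {\cal C}_{t-1}\, f_{j_{\pi(t)}}(S)$ for every $\pi$.

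The decisive step is the symmetrization. I would sum this ordering-dependent inequality over all $r!$ permutations $\pi$: the left-hand side becomes $r!\,[f(T)-f(S)]$, while on the right each element $j\in T\backslash S$ occupies each position $t$ in exactly $(r-1)!$ of the permutations, so $\sum_\pi {\cal C}_{t-1}\, f_{j_{\pi(t)}}(S)=(r-1)!\,{\cal C}_{t-1}\sum_{j\in T\backslash S} f_j(S)$. Summing over $t$, using $\sum_{t=1}^r {\cal C}_{t-1}=1+\sum_{l=1}^{r-1}{\cal C}_l$, and dividing by $r!$ gives $f(T)-f(S)\leq \tfrac{1}{r}\bigl(1+\sum_{l=1}^{r-1}{\cal C}_l\bigr)\sum_{j\in T\backslash S} f_j(S)=C(r)\sum_{j\in T\backslash S} f_j(S)$, which is exactly the claim.

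The main obstacle, and the only genuinely non-mechanical idea, is this symmetrization. A single chain delivers only the ordering-dependent estimate $\sum_t {\cal C}_{t-1}\, f_{j_{\pi(t)}}(S)$, whose coefficients are tied to the positions of specific elements and therefore do not match the symmetric average-curvature coefficient $C(r)$; one cannot in general reorder the elements so that the large marginals $f_j(S)$ meet the small curvatures. Exploiting the permutation-invariance of $f(T)-f(S)$ and averaging over all $r!$ orderings is precisely what converts the position-dependent coefficients into the uniform average $C(r)$. The remaining ingredients — the telescoping identity, the per-step curvature bound from the definition of $\mathcal{X}_{t-1}$, and the combinatorial count of $(r-1)!$ permutations — are routine.
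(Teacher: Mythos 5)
Your proof is correct and follows essentially the same route as the paper's: telescope $f(T)-f(S)$ along a chain, bound each marginal gain via the element-wise curvature ${\cal C}_{t-1}$, and symmetrize over orderings to replace position-dependent coefficients with the average $C(r)$. The only cosmetic difference is that you average over all $r!$ permutations, whereas the paper sums over just $r$ cyclic shifts in which each element occupies each position exactly once---both counts yield the identical inequality.
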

\begin{proof}
See Appendix~\oldref{pf:curv}.
\end{proof}
%%%%%%%%%%%%%%%%%%%%%%%%%%%%%%%%%%%%%%%%%%%%%%%%%%%%%%%%%%%%%%%%%%%%%%%%%%
\begin{lemma}\label{lem:rand}
\textit{Let $S_k^{(i)}$ be the set of selected sensors at the end of the $i\ts{th}$ iteration of Algorithm \oldref{alg:greedy}. Then
\begin{equation}
\E\left[f_{(i+1)_s}(S_k^{(i)})|S_k^{(i)}\right]\geq \frac{1-\epsilon^{\beta}}{K}\sum_{j\in O_k\backslash S_k^{(i)}}f_j(S_k^{(i)}),
\end{equation}
where $O_k$ is the set of optimal sensors at time $k$, $(i+1)_s$ is the index of the selected 
sensor at the $(i+1)\ts{st}$ iteration, $\beta=1+\max\{0,\frac{s}{2n}-\frac{1}{2(n-s)}\}$, and 
$s=\frac{n}{K}\log{(1/\epsilon)}$.}
\end{lemma}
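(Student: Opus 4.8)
The plan is to bound the conditional expectation by separating two effects: a symmetrization inequality that relates the expected \emph{maximum} gain over the random sample $R$ to the \emph{average} gain over the still-available optimal sensors, and a purely combinatorial estimate of the probability that $R$ intersects the optimal set. Throughout I would condition on $S_k^{(i)}$, write $O':=O_k\backslash S_k^{(i)}$, set $t:=|O'|\le |O_k|=K$, and use the fact from Proposition~\oldref{p:mono} that every marginal $f_j(S_k^{(i)})$ is nonnegative, since $f$ is monotone non-decreasing.

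For the symmetrization step, I would first note that the greedy selection rule of Algorithm~\oldref{alg:greedy} makes $f_{(i+1)_s}(S_k^{(i)})=\max_{j\in R}f_j(S_k^{(i)})$. Introduce an auxiliary index $\hat\jmath$ drawn uniformly at random from $R\cap O'$ whenever that set is nonempty. On the event $\{R\cap O'\ne\emptyset\}$ we then have $f_{(i+1)_s}(S_k^{(i)})\ge f_{\hat\jmath}(S_k^{(i)})$, while on the complement the left-hand side is still nonnegative; hence $\E[f_{(i+1)_s}(S_k^{(i)})\mid S_k^{(i)}]\ge \E[f_{\hat\jmath}(S_k^{(i)})\,\mathbbm{1}\{R\cap O'\ne\emptyset\}\mid S_k^{(i)}]$. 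The crucial observation is that $R$ is a uniform $s$-subset of $[n]\backslash S_k^{(i)}$ and $O'\subseteq [n]\backslash S_k^{(i)}$, so by exchangeability every element of $O'$ is equally likely to be the realization of $\hat\jmath$; thus $\Pr[\hat\jmath=o]=\Pr[R\cap O'\ne\emptyset]/t$ for each $o\in O'$, which yields
\[
\E\!\left[f_{(i+1)_s}(S_k^{(i)})\,\big|\,S_k^{(i)}\right]\ge\frac{\Pr[R\cap O'\ne\emptyset]}{t}\sum_{j\in O'}f_j(S_k^{(i)}).
\]

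It then remains to show $\tfrac{\Pr[R\cap O'\ne\emptyset]}{t}\ge\tfrac{1-\epsilon^{\beta}}{K}$, which I would obtain through the complementary miss probability. Since $R$ is sampled without replacement from the $(n-i)$-element set $[n]\backslash S_k^{(i)}$, the exact expression is $\Pr[R\cap O'=\emptyset]=\prod_{\ell=0}^{t-1}\bigl(1-\tfrac{s}{(n-i)-\ell}\bigr)$. Applying $\log(1-x)\le -x$ to each factor and using the calibration $s=\tfrac{n}{K}\log(1/\epsilon)$ together with $n-i\le n$ reproduces the first-order, Mirzasoleiman-type estimate $\epsilon^{t/K}$; retaining the next term via $\log(1-x)\le -x-x^2/2$ improves the exponent, and carrying the without-replacement correction over the reduced set $[n]\backslash S_k^{(i)}$ delivers the conservative closed form $\Pr[R\cap O'=\emptyset]\le \epsilon^{\beta t/K}$ with $\beta=1+\max\{0,\tfrac{s}{2n}-\tfrac{1}{2(n-s)}\}$, the truncation $\max\{0,\cdot\}$ guarding the regime where the correction would fall below the baseline value $1$. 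Finally, because the map $u\mapsto 1-\epsilon^{\beta u}$ is concave and vanishes at $u=0$, its difference quotient $\tfrac{1-\epsilon^{\beta u}}{u}$ is non-increasing; evaluating at $u=t/K\le 1$ gives $\tfrac{1-\epsilon^{\beta t/K}}{t}\ge\tfrac{1-\epsilon^{\beta}}{K}$, and combining this with $\Pr[R\cap O'\ne\emptyset]\ge 1-\epsilon^{\beta t/K}$ and the displayed symmetrization bound completes the argument.

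I expect the main obstacle to be the hitting-probability estimate: converting the exact without-replacement product into the clean exponent $\beta$ uniformly over the iteration index $i$ and over the unknown cardinality $t=|O'|$. The delicate points are (i) carrying the second-order expansion far enough to certify $\beta>1$ while keeping every inequality in the safe direction, and (ii) ensuring the resulting exponent is proportional to $t$ so that the concavity (difference-quotient) reduction to the target normalization $1/K$ can be invoked. By contrast, the symmetrization step is essentially bookkeeping once nonnegativity of the marginals and exchangeability of $R$ are in hand.
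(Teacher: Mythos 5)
Your proposal is correct, and it follows the same overall skeleton as the paper's proof: lower-bound the probability that the random sample $R$ hits $O' := O_k\backslash S_k^{(i)}$, convert that hitting probability into the averaged bound $\E\bigl[f_{(i+1)_s}(S_k^{(i)})\mid S_k^{(i)}\bigr]\ge \frac{\Pr\{R\cap O'\neq\emptyset\}}{t}\sum_{j\in O'}f_j(S_k^{(i)})$ with $t=|O'|\le K$, and finish with concavity of $u\mapsto 1-\epsilon^{\beta u}$ to trade $t$ for $K$. Two of your steps, however, take a genuinely different route. First, where the paper simply cites Lemma~2 of \cite{mirzasoleiman2014lazier} for the averaging inequality, you prove it directly via the auxiliary uniform index $\hat\jmath$ and exchangeability of the uniform $s$-subset; this makes the lemma self-contained precisely in the sampling-without-replacement regime that the paper emphasizes (Remark~5) as the point of departure from \cite{mirzasoleiman2014lazier}. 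Second, for the miss probability you use the dual hypergeometric product over the $t$ optimal elements, $\Pr\{R\cap O'=\emptyset\}=\prod_{\ell=0}^{t-1}\bigl(1-\frac{s}{n-i-\ell}\bigr)$, so the termwise bound $n-i-\ell\le n$ together with $\log(1-x)\le -x-\frac{x^2}{2}$ immediately yields $\Pr\{R\cap O'=\emptyset\}\le\epsilon^{\frac{t}{K}\left(1+\frac{s}{2n}\right)}$; the paper instead expands over the $s$ draws and must invoke AM--GM plus the harmonic-number expansion $H_p=\log p+\gamma+\zeta_p$ (with its $\mathcal{O}(1/(n-s)^4)$ remainder bookkeeping) to reach the exponent $1+\frac{s}{2n}-\frac{1}{2(n-s)}$. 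Note that your exponent $1+\frac{s}{2n}$ dominates the paper's $\beta=1+\max\{0,\frac{s}{2n}-\frac{1}{2(n-s)}\}$, so since $\epsilon<1$ the stated bound follows a fortiori; the one imprecision in your write-up is attributing the paper's $\beta$ (and its $\max\{0,\cdot\}$ truncation) to your own computation, when your route in fact delivers a strictly sharper constant and never needs the truncation. What each approach buys: the paper's derivation explains where the $-\frac{1}{2(n-s)}$ correction in $\beta$ comes from, while yours is more elementary, sidesteps the sign-control of the zeta remainder terms, and slightly strengthens the lemma.
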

\begin{proof}
See Appendix~\oldref{pf:rand}.
\end{proof}
%%%%%%%%%%%%%%%%%%%%%%%%%%%%%%%%%%%%%%%%%%%%%%%%%%%%%%%%%%
Theorem \oldref{thm:card} below specifies how accurate the approximate solution to the 
sensor selection problem found by Algorithm \oldref{alg:greedy} is. In particular, if $f(S)$ is 
characterized by a bounded maximum element-wise curvature, Algorithm \oldref{alg:greedy} 
returns a subset of sensors yielding an objective that is on average within a multiplicative 
factor of the objective achieved by the optimal schedule.

\begin{theorem}\label{thm:card}
\textit{Let $\mathcal{C}_{max}$ be the maximum element-wise curvature of $f(S)$, i.e., the 
objective function of sensor scheduling problem in \ref{eq:sensor}. Let $S_k$ denote the 
subset of sensors selected by Algorithm \oldref{alg:greedy} at time $k$, and let $O_k$ be 
the optimum solution to \ref{eq:sensor} such that $|O_k|=K$. Then $f(S_k)$ is on expectation 
a multiplicative factor away from $f(O_k)$. That is,}
\begin{equation}\label{eq:card}
\E\left[f(S_k)\right]\geq \left(1-e^{-\frac{1}{c}}-\frac{\epsilon^\beta}{c}\right) f(O_k),
\end{equation}
\textit{where $c=\max\{{\cal C}_{\max},1\}$, $e^{-K}\leq\epsilon<1$, and 
$\beta=1+\max\{0,\frac{s}{2n}-\frac{1}{2(n-s)}\}$. Furthermore, the computational complexity 
of Algorithm \oldref{alg:greedy} is ${\cal O}(nm^2\log(\frac{1}{\epsilon}))$ where $n$ is the 
total number of sensors and $m$ is the dimension of $\x_k$.}
\end{theorem}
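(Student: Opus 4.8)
The plan is to fold the two preparatory lemmas into a single one-step recursion on the expected optimality gap $f(O_k)-\E[f(S_k^{(i)})]$ and then iterate it over the $K$ rounds of Algorithm~\oldref{alg:greedy}, following the classical greedy template but with the curvature constant $c$ replacing submodularity. First I would convert Lemma~\oldref{lem:curv} into a comparison between the current iterate and the optimum by applying it with $S=S_k^{(i)}$ and $T=S_k^{(i)}\cup O_k$, so that $r=|O_k\setminus S_k^{(i)}|\le K$. Since every $\mathcal{C}_l\le\mathcal{C}_{max}$, the coefficient obeys $C(r)=\tfrac{1}{r}(1+\sum_{l=1}^{r-1}\mathcal{C}_l)\le\max\{1,\mathcal{C}_{max}\}=c$, which I would verify by treating the cases $\mathcal{C}_{max}\ge1$ and $\mathcal{C}_{max}<1$ separately. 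Combining this with the monotonicity of $f$ from Proposition~\oldref{p:mono} (so that $f(S_k^{(i)}\cup O_k)\ge f(O_k)$) yields
\[
\sum_{j\in O_k\setminus S_k^{(i)}} f_j(S_k^{(i)})\ \ge\ \frac{1}{c}\bigl(f(O_k)-f(S_k^{(i)})\bigr).
\]

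Next I would feed this into Lemma~\oldref{lem:rand}. Because $f(S_k^{(i+1)})-f(S_k^{(i)})=f_{(i+1)_s}(S_k^{(i)})$, the conditional lower bound on the expected marginal gain combined with the inequality above gives, after applying the tower rule to average over the randomness in $S_k^{(i)}$,
\[
\E[f(S_k^{(i+1)})]-\E[f(S_k^{(i)})]\ \ge\ \frac{1-\epsilon^\beta}{cK}\bigl(f(O_k)-\E[f(S_k^{(i)})]\bigr).
\]
Setting $a_i=f(O_k)-\E[f(S_k^{(i)})]$, this reads $a_{i+1}\le\bigl(1-\tfrac{1-\epsilon^\beta}{cK}\bigr)a_i$, and since $a_0=f(O_k)$ (using $f(\emptyset)=0$), iterating across the $K$ rounds gives $a_K\le\bigl(1-\tfrac{1-\epsilon^\beta}{cK}\bigr)^K f(O_k)$.

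To obtain the stated form, I would simplify the geometric factor with $1-x\le e^{-x}$ to get $\bigl(1-\tfrac{1-\epsilon^\beta}{cK}\bigr)^K\le e^{-(1-\epsilon^\beta)/c}$, and then apply the elementary inequality $e^{-(1-y)/c}\le e^{-1/c}+y/c$, valid for $0\le y<1$ and $c>0$ (the difference vanishes at $y=0$ and has nonnegative derivative on $[0,1]$), with $y=\epsilon^\beta$. This produces $a_K\le\bigl(e^{-1/c}+\tfrac{\epsilon^\beta}{c}\bigr)f(O_k)$, and rearranging $\E[f(S_k)]=f(O_k)-a_K$ gives exactly \ref{eq:card}. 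For the complexity claim, I would note that each of the $K$ outer iterations evaluates the marginal gain \ref{eq:mg} for the $s=\tfrac{n}{K}\log(1/\epsilon)$ sampled sensors at $\mathcal{O}(m^2)$ cost each (only the products $\F^{-1}\h$ and $\h^\top\F^{-2}\h$ are needed, and $\F^{-1}$ is maintained, never inverted) and performs one rank-one update \ref{eq:upf}, also $\mathcal{O}(m^2)$; multiplying yields $K\cdot s\cdot\mathcal{O}(m^2)=\mathcal{O}(nm^2\log(1/\epsilon))$.

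I expect the main obstacle to be the passage in the second step from the conditional estimate of Lemma~\oldref{lem:rand} to an unconditional recursion: the inequality from the first step must be invoked \emph{inside} the conditional expectation, with the already-realized set $S_k^{(i)}$ treated as fixed, before the tower rule is applied, and care is needed because $S_k^{(i)}$ is itself random. A secondary subtlety is establishing $C(r)\le c$ uniformly in $r$, which is precisely what decouples the $r$-dependence and lets the recursion close with a single constant; likewise, recovering the exact additive form of \ref{eq:card} relies on the nonstandard inequality used in the third step rather than $1-x\le e^{-x}$ alone.
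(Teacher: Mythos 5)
Your proof is correct and takes essentially the same route as the paper's: Lemma~\oldref{lem:curv} applied with $T=O_k\cup S_k^{(i)}$, Lemma~\oldref{lem:rand} plus the tower rule, the uniform bound $C(r)\le c=\max\{\mathcal{C}_{\max},1\}$, the induction giving $\bigl(1-\tfrac{1-\epsilon^\beta}{cK}\bigr)^K$, and the same two exponential estimates (your inequality $e^{-(1-y)/c}\le e^{-1/c}+y/c$ is exactly the paper's $e^{ax}\le 1+axe^a$ in disguise), together with an identical complexity count. The only cosmetic difference is that you bound $C(r)\le c$ \emph{before} averaging over $S_k^{(i)}$, which handles the randomness of $r=|O_k\setminus S_k^{(i)}|$ slightly more cleanly than the paper's ordering.
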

\begin{proof}
Consider $S_k^{(i)}$, the set generated by the end of the $i\ts{th}$ iteration of 
Algorithm \oldref{alg:greedy}. Employing Lemma \oldref{lem:curv} with $S=S_k^{(i)}$ 
and $T=O_k\cup S_k^{(i)}$, and using monotonicity of $f$, yields
\begin{equation}
\begin{aligned}
\frac{f(O_k)-f(S_k^{(i)})}{\frac{1}{r}\left(1+\sum_{l=1}^{r-1}{\cal C}_l\right)}&\leq \frac{f(O_k\cup S_k^{(i)})-f(S_k^{(i)})}{\frac{1}{r}\left(1+\sum_{l=1}^{r-1}{\cal C}_l\right)}\\
&\leq  \sum_{j\in O_k\backslash S_k^{(i)}}f_j(S_k^{(i)}),
\end{aligned}
\end{equation}
where $|O_k\backslash S_k^{(i)}|=r$. Now, using Lemma \oldref{lem:rand} we obtain
\begin{equation}
\E\left[f_{(i+1)_s}(S_k^{(i)})|S_k^{(i)}\right]\geq \left(1-\epsilon^{\beta}\right)\frac{f(O_k)-f(S_k^{(i)})}{\frac{K}{r}\left(1+\sum_{l=1}^{r-1}{\cal C}_l\right)}.
\end{equation}
Applying the law of total expectation yields
\begin{equation}
\begin{aligned}
\E\left[f_{(i+1)_s}(S_k^{(i)})\right]&=\E\left[f(S_k^{(i+1)})-f(S_k^{(i)})\right]\\
&\geq \left(1-\epsilon^{\beta}\right)\frac{f(O_k)-\E\left[f(S_k^{(i)})\right]}{\frac{K}{r}\left(1+\sum_{l=1}^{r-1}{\cal C}_l\right)}.
\end{aligned}
\end{equation}
Using the definition of the maximum element-wise curvature, we obtain 
\begin{equation}
\frac{1}{r}\left(1+\sum_{l=1}^{r-1}{\cal C}_l\right)\leq \frac{1}{r}(1+(r-1){\cal C}_{\max}) =g(r).
\end{equation}
It is easy to verify, e.g., by taking the derivative, that $g(r)$ is decreasing (increasing) with respect to $r$ if ${\cal C}_{\max}<1$ (${\cal C}_{\max}>1$). Let $c=\max\{{\cal C}_{\max},1\}$. Then 
\begin{equation}
\frac{1}{r} \left(1+\sum_{l=1}^{r-1}{\cal C}_l\right)\leq \frac{1}{r}(1+(r-1){\cal C}_{\max}) \leq c.
\end{equation}
Hence, 
\begin{equation}
\E\left[f(S_k^{(i+1)})-f(S_k^{(i)})\right]\geq \frac{1-\epsilon^\beta}{Kc}\left(f(O_k)-\E\left[f(S_k^{(i)})\right]\right).
\end{equation}
Using an inductive argument and due to the fact that $f(\emptyset) = 0$, we obtain
\begin{equation}
\E[f(S_k)]\geq \left(1-\left(1-\frac{1-\epsilon^\beta}{Kc}\right)^K\right)f(O_k).
\end{equation}
Finally, using the fact that $(1+x)^y\leq e^{xy}$ for $y>0$ and  the easily verifiable fact that $e^{ax}\leq 1+axe^a$ for $0<x<1$, 
\begin{equation}
\begin{aligned}
\E[f(S_k)]&\geq \left(1-e^{-\frac{1-\epsilon^\beta}{c}}\right)f(O_k)\\
&\stackrel{}{\geq} \left(1-e^{-\frac{1}{c}}-\frac{\epsilon^\beta}{c}\right)f(O_k).
\end{aligned}
\end{equation}
To take a closer look at computational complexity of Algorithm \oldref{alg:greedy}, note that 
step 6 costs $\mathcal{O}(\frac{n}{K}m^2\log(\frac{1}{\epsilon}))$ since one needs to compute 
$\frac{n}{K}\log(\frac{1}{\epsilon})$ marginal gains, each requiring ${\cal O}(m^2)$ operations. 
Furthermore, step 8 requires ${\cal O}(m^2)$ arithmetic operations. Since there are $K$ such 
iterations, running time of Algorithm \oldref{alg:greedy} is ${\cal O}(nm^2\log(\frac{1}{\epsilon}))$. 
This completes the proof.
\end{proof}
%%%%%%%%%%%%%%%%%%%%%%%%%%%%%%%%%%%%%%%%%%%%%%%%%%%%%%%%%%%%%%%%%%%%%%%%%%
Using the definition of $f(S)$ we obtain Corollary \oldref{col:mse} stating that, at each time step, the achievable MSE in \ref{eq:mse} obtained by forming an estimate using sensors selected by the randomized greedy algorithm is within a factor of the optimal MSE.

\begin{corollary}\label{col:mse}
Consider the notation and assumptions of Theorem \oldref{thm:card} and introduce
$\alpha =1-e^{-\frac{1}{c}}-\frac{\epsilon^\beta}{c}$. Let $\mathrm{MSE}_{S_k}$ denote the 
mean-square estimation error obtained by forming an estimate using information provided 
by the sensors selected by Algorithm \oldref{alg:greedy} at time $k$, and let 
$\mathrm{MSE}_{o}$ be the optimal mean-square error formed using information collected 
by the sensors specified by the optimum solution of \ref{eq:sensor}. Then the expected 
$\mathrm{MSE}_{S_k}$ is bounded as
\begin{equation}\label{eq:card1}
\E\left[\mathrm{MSE}_{S_k}\right]\leq \alpha \mathrm{MSE}_{o} + (1-\alpha) \mathrm{Tr}(\P_{k|k-1}).
\end{equation}
\end{corollary}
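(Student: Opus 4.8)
The plan is to derive the corollary directly from the approximation guarantee of Theorem~\oldref{thm:card} by translating between the surrogate objective $f$ and the MSE. First I would recall that by the definition $f(S)=\mathrm{Tr}\left(\P_{k|k-1}-\F_S^{-1}\right)$ together with the identity $\mathrm{MSE}_S=\mathrm{Tr}(\F_S^{-1})$ from \ref{eq:mse}, every subset $S$ satisfies the exact relation $f(S)=\mathrm{Tr}(\P_{k|k-1})-\mathrm{MSE}_S$. Applying this to the randomized greedy output $S_k$ and to the optimum $O_k$ gives $f(S_k)=\mathrm{Tr}(\P_{k|k-1})-\mathrm{MSE}_{S_k}$ and $f(O_k)=\mathrm{Tr}(\P_{k|k-1})-\mathrm{MSE}_{o}$.

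Next I would take expectations, noting that $\P_{k|k-1}$ and $O_k$ are deterministic while $S_k$ is random, so $\E[f(S_k)]=\mathrm{Tr}(\P_{k|k-1})-\E[\mathrm{MSE}_{S_k}]$. Invoking the guarantee of Theorem~\oldref{thm:card}, namely $\E[f(S_k)]\geq \alpha\, f(O_k)$ with $\alpha=1-e^{-1/c}-\epsilon^\beta/c$, and substituting the two identities yields
\begin{equation}
\mathrm{Tr}(\P_{k|k-1})-\E\left[\mathrm{MSE}_{S_k}\right]\geq \alpha\left(\mathrm{Tr}(\P_{k|k-1})-\mathrm{MSE}_{o}\right).
\end{equation}
Rearranging to isolate $\E[\mathrm{MSE}_{S_k}]$ then produces the claimed bound $\E[\mathrm{MSE}_{S_k}]\leq \alpha\,\mathrm{MSE}_{o}+(1-\alpha)\mathrm{Tr}(\P_{k|k-1})$.

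Since the analytical work is already carried out in Theorem~\oldref{thm:card}, there is no genuine technical obstacle here; the corollary is essentially a change of variables. The one point that demands care is the sign reversal induced by passing from maximizing $f$ to minimizing the MSE: the lower bound on $\E[f(S_k)]$ must be converted into an upper bound on $\E[\mathrm{MSE}_{S_k}]$, and in doing so one should verify that $\alpha$ remains the weight on $\mathrm{MSE}_{o}$ while the complementary factor $(1-\alpha)$ multiplies the baseline $\mathrm{Tr}(\P_{k|k-1})$, which is the trace of the prior (prediction) error covariance.
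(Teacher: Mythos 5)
Your proposal is correct and is essentially the paper's own argument: the paper derives this corollary directly from Theorem \oldref{thm:card} via the identity $f(S)=\mathrm{Tr}(\P_{k|k-1})-\mathrm{Tr}(\F_S^{-1})=\mathrm{Tr}(\P_{k|k-1})-\mathrm{MSE}_S$, exactly the change of variables and rearrangement you carry out. Your attention to the sign reversal (lower bound on $\E[f(S_k)]$ becoming an upper bound on $\E[\mathrm{MSE}_{S_k}]$) is the only nontrivial point, and you handle it correctly.
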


\textit{Remark 3:} Since the proposed sensor selection scheme is a randomized algorithm, 
the analysis of its {\it expected} MSE, as provided by Theorem \oldref{thm:card} and Corollary 
\oldref{col:mse}, is a meaningful performance characterization. Notice that, as expected, 
$\alpha$ is decreasing in both $c$ and $\epsilon$. If $f(S)$ is characterized by a small 
curvature, then $f(S)$ is nearly submodular and the randomized greedy algorithm delivers a 
near-optimal sensor scheduling. As we decrease $\epsilon$, $\alpha$ increases which in turn 
leads to a better approximation factor. Moreover, by following an argument similar to that of 
the classical analysis in \cite{nemhauser1978analysis}, one can show that the approximation 
factor for the greedy algorithm is given by $\alpha_g = 1-e^{-\frac{1}{c}}$ (see also 
\cite{chamon2017mean,singh2017supermodular}). Therefore, the term $\frac{\epsilon}{c}$ 
in $\alpha$ denotes the difference between the approximation factors of the proposed 
randomized greedy algorithm and the conventional greedy scheme.

\textit{Remark 4:} The computational complexity of the greedy method for sensor selection 
that finds marginal gains via the efficient recursion given in Proposition~1 is ${\cal O}(Knm^2)$. 
Hence, our proposed scheme provides a reduction in complexity by 
$K\slash \log(\frac{1}{\epsilon})$ which may be particularly beneficial in large-scale networks, 
as illustrated in our simulation results.

\textit{Remark 5:} In contrast to the results of \cite{mirzasoleiman2014lazier} derived in the
context of maximizing monotone submodular functions, Theorem~\oldref{thm:card} relaxes 
the submodularity assumption and states that the randomized greedy algorithm does not 
require submodularity to achieve near-optimal performance. Rather, if the set function is 
{\it weak submodular}, Algorithm \oldref{alg:greedy} still selects a subset of sensors that 
provide an MSE near that achieved by the optimal subset of sensors. In addition, even if 
the function is submodular (e.g., if we use the $\log\det$ objective instead of the MSE), the 
results of Theorem \ref{thm:card} offer an improvement over the theoretical results of 
\cite{mirzasoleiman2014lazier} due to a tighter approximation bound stemming from the
analysis presented in the proof of Theorem \ref{thm:card}. Moreover, a major assumption 
in \cite{mirzasoleiman2014lazier} is that ${R}$ is constructed by sampling with replacement. 
Clearly, this contradicts the fact that a sensor selected in one iteration will not be in ${R}$ 
in the subsequent iteration with probability one. On the contrary, we assume ${R}$ is 
constructed by sampling without replacement and carry out the analysis in this setting
that matches the actual randomized greedy sensor selection strategy.

%%%%%%%%%%%%%%%%%%%%%%%%%%%%%%%%%%%%%%%%%%%%%%%%%%%%%%%%%%%%%%%%%%%%%%%%%%

The randomized selection step of Algorithm \oldref{alg:greedy} can be interpreted as 
an approximation of the marginal gains of the selected sensors using a  greedy 
scheme \cite{shamaiah2010greedy}. More specifically, for the $i^{th}$ iteration it holds 
that $f_{j_{rg}}(S_k^{(i)}) = \eta_k^{(i)} f_{j_{g}}(S_k^{(i)})$, where subscripts $rg$ and 
$g$ refer to the sensors selected by the randomized greedy (Algorithm \oldref{alg:greedy}) 
and the greedy algorithm, respectively, and $\{\eta_k^{(i)}\}_{i=1}^K$ are random variables with mean $\mu_i(\epsilon)$ that satisfy $0<\ell_{i}(\epsilon)\leq\eta_k^{(i)}\leq 1$ for 
all $i \in [K]$.\footnote{Notice that 
$\ell_{i}(\epsilon)$ and $\mu_{i}(\epsilon)$ are time-varying quantities where the time 
index is omitted for the simplicity of notation.}
In view of this argument, we obtain 
Theorem \oldref{thm:pac} which states that  if $f(S)$ is characterized by a bounded maximum element-wise curvature and $\{\eta_k^{(i)}\}_{i=1}^K$ are independent random variables, Algorithm \oldref{alg:greedy} returns a subset of sensors yielding an objective that with high probability is only a multiplicative factor away from the objective achieved by the optimal schedule.
\begin{theorem}\label{thm:pac}
Instate the notation and assumptions of Theorem \oldref{thm:card}. Let $\{\eta_k^{(i)}\}_{i=1}^K$ 
denote a collection of random variables such that $0<\ell_{i}(\epsilon)\leq\eta_k^{(i)}\leq 1$, and 
$\E[\eta_k^{(i)}] = \mu_i(\epsilon)$ for all $i$ and $k$. Let $\ell_{min}(\epsilon) = 
\min_{i,k}\{\ell_i(\epsilon)\}$ and $\mu_{min}(\epsilon) = \min_{i,k}\{\mu_i(\epsilon)\}$. Then,
\begin{equation}\label{eq:res1}
\begin{aligned}
f(S_k) \geq \left(1- e^{-\frac{\ell_{min}(\epsilon)}{c}}\right)f(O_k).
\end{aligned}
\end{equation}
Furthermore, if  $\{\eta_k^{(i)}\}_{i=1}^K$ are independent, then for all $0<q<1$ with probability 
at least $1-e^{-CK}$, it holds that
\begin{equation}\label{eq:pacbound}
f(S_k)\geq \left(1-e^{-\frac{(1-q)\mu_{min}(\epsilon)}{c}}\right) f(O_k),
\end{equation}
for some $C>0$.
\end{theorem}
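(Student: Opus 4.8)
The plan is to adapt the inductive argument from the proof of Theorem \oldref{thm:card}, but to replace the expected marginal-gain estimate by a \emph{pathwise} one expressed through the ratios $\eta_k^{(i)}$ defined by $f_{(i+1)_s}(S_k^{(i)}) = \eta_k^{(i)} f_{j_g}(S_k^{(i)})$, where $j_g$ is the sensor the greedy rule would add to $S_k^{(i)}$. Since the greedy choice maximizes the marginal gain over all unselected sensors, and since marginal gains are nonnegative by monotonicity of $f$ (Proposition \oldref{p:mono}), the greedy gain dominates the average gain over the residual optimal set:
\[
f_{j_g}(S_k^{(i)}) \geq \frac{1}{K}\sum_{j\in O_k\setminus S_k^{(i)}} f_j(S_k^{(i)}),
\]
because $|O_k\setminus S_k^{(i)}|\leq K$. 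Feeding this into Lemma \oldref{lem:curv} with $S=S_k^{(i)}$ and $T=O_k\cup S_k^{(i)}$, and using $f(O_k)\leq f(O_k\cup S_k^{(i)})$ together with the curvature bound $C(r)\leq c$ (established exactly as in the proof of Theorem \oldref{thm:card}), gives $f_{j_g}(S_k^{(i)}) \geq (f(O_k)-f(S_k^{(i)}))/(cK)$. Multiplying by $\eta_k^{(i)}$ and using $f(S_k^{(i+1)})-f(S_k^{(i)}) = f_{(i+1)_s}(S_k^{(i)})$ yields the key per-iteration recursion
\[
f(O_k)-f(S_k^{(i+1)}) \leq \left(1-\frac{\eta_k^{(i)}}{cK}\right)\left(f(O_k)-f(S_k^{(i)})\right),
\]
which holds along every sample path.

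For the first (almost sure) bound \ref{eq:res1}, I would unroll this recursion over $i=0,\dots,K-1$, use $f(S_k^{(0)})=f(\emptyset)=0$, and lower bound each $\eta_k^{(i)}$ by $\ell_{min}(\epsilon)$ to obtain
\[
f(O_k)-f(S_k) \leq \left(1-\frac{\ell_{min}(\epsilon)}{cK}\right)^K f(O_k) \leq e^{-\ell_{min}(\epsilon)/c}\,f(O_k),
\]
where the final step uses $(1+x)^y\leq e^{xy}$ for $y>0$; rearranging produces \ref{eq:res1}.

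For the high-probability bound \ref{eq:pacbound}, the crucial idea is to retain the \emph{product} of the random factors instead of replacing each by its worst-case value. Taking logarithms and applying $\log(1-x)\leq -x$ converts the product into an exponential of a sum,
\[
\prod_{i=0}^{K-1}\left(1-\frac{\eta_k^{(i)}}{cK}\right) \leq \exp\left(-\frac{1}{cK}\sum_{i=0}^{K-1}\eta_k^{(i)}\right),
\]
so that $f(S_k)\geq \left(1-e^{-\bar\eta/c}\right)f(O_k)$ with $\bar\eta=\frac{1}{K}\sum_{i=0}^{K-1}\eta_k^{(i)}$. Because the $\eta_k^{(i)}$ are independent, take values in $[0,1]$, and satisfy $\E[\bar\eta]\geq \mu_{min}(\epsilon)$, I would apply Hoeffding's inequality to the sum: the event $\{\bar\eta<(1-q)\mu_{min}(\epsilon)\}$ forces $\bar\eta-\E[\bar\eta]<-q\mu_{min}(\epsilon)$, hence
\[
\Pr\!\left[\bar\eta<(1-q)\mu_{min}(\epsilon)\right] \leq e^{-2Kq^2\mu_{min}(\epsilon)^2}.
\]
On the complementary event $\bar\eta\geq(1-q)\mu_{min}(\epsilon)$, monotonicity of $x\mapsto 1-e^{-x/c}$ delivers \ref{eq:pacbound} with $C=2q^2\mu_{min}(\epsilon)^2>0$.

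The main obstacle is the passage from the sure bound to the probabilistic one: one must resist collapsing each factor to its $\ell_{min}$ lower bound (which only yields the deterministic guarantee) and instead carry the full product through the logarithm so that the independent, bounded variables $\eta_k^{(i)}$ enter additively and a concentration inequality becomes applicable. A secondary subtlety is verifying that the deviation budget $\E[\bar\eta]-(1-q)\mu_{min}(\epsilon)\geq q\mu_{min}(\epsilon)$ is exactly what Hoeffding controls with variance proxy $K$, so that the failure probability scales as $e^{-CK}$ as claimed.
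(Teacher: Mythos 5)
Your proposal is correct, and its deterministic core is exactly the paper's argument: the same application of Lemma \oldref{lem:curv} with $S=S_k^{(i)}$, $T=O_k\cup S_k^{(i)}$, the same greedy-dominance bound $\sum_{j\in O_k\backslash S_k^{(i)}}f_j(S_k^{(i)})\leq K f_{(i+1)_g}(S_k^{(i)})$, the same pathwise recursion through $\eta_k^{(i)}$, and the same product-to-exponential step $\prod_i\left(1-\frac{\eta_k^{(i)}}{cK}\right)\leq \exp\left(-\frac{1}{cK}\sum_i\eta_k^{(i)}\right)$ (your explicit product unrolling is in fact the correct reading of the paper's displayed induction, which typographically collapses the product into a sum). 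The one genuine divergence is the concentration step. The paper bounds $\Var[\eta_k^{(i)}]\leq\frac{1}{4}(1-\ell_i(\epsilon))^2$ via Popoviciu's inequality and invokes Bernstein's inequality, producing a failure exponent $C(\epsilon,q)$ that involves both $\mu_{min}(\epsilon)$ and $\ell_{min}(\epsilon)$; you instead apply Hoeffding's inequality to the independent $[0,1]$-bounded variables, which is more elementary and yields the clean explicit constant $C=2q^2\mu_{min}(\epsilon)^2>0$. Both routes deliver the claimed $1-e^{-CK}$ probability; your Hoeffding bound is simpler and fully explicit, while the paper's Bernstein bound is tighter in the regime where the $\eta_k^{(i)}$ concentrate near $1$ (small $\epsilon$, so $\ell_i(\epsilon)$ close to $1$ and the variance proxy far below the worst-case range), which is precisely when the randomized greedy selections nearly coincide with the greedy ones.
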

\begin{proof}
Consider $S_k^{(i)}$, the set generated by the end of the $i\ts{th}$ iteration of Algorithm 
\oldref{alg:greedy} and let $(i+1)_{g}$ and $(i+1)_{rg}$ denote the sensors selected by the
greedy and randomized greedy algorithm in the $i\ts{th}$ iteration, respectively.  Let 
$c=\max\{{\cal C}_{\max},1\}$. Employing Lemma \oldref{lem:curv} with $S=S_k^{(i)}$ and 
$T=O_k\cup S_k^{(i)}$, and using monotonicity of $f$, yields
\begin{equation}
\begin{aligned}
f(O_k)-f(S_k^{(i)})&\leq f(O_k\cup S_k^{(i)})-f(S_k^{(i)})\\
&\leq  c \sum_{j\in O_k\backslash S_k^{(i)}}f_j(S_k^{(i)}).
\end{aligned}
\end{equation}
Using the fact that 
\begin{equation}
f_j(S_k^{(i)})\leq f_{(i+1)_{rg}}(S_k^{(i)}) \leq f_{(i+1)_{g}}(S_k^{(i)})
\end{equation}
for all $j$, we obtain
\begin{equation}\label{eq:rel1}
\begin{aligned}
f(O_k)-f(S_k^{(i)})\leq cKf_{(i+1)_{g}}(S_k^{(i)}).
\end{aligned}
\end{equation}
On the other hand,
\begin{equation}\label{eq:rel2}
\begin{aligned}
f(S_k^{(i+1)}) - f(S_k^{(i)})&=f_{(i+1)_{rg}}(S_k^{(i)}) \\
&=\eta_k^{(i+1)} f_{(i+1)_{g}}(S_k^{(i)}).
\end{aligned}
\end{equation} 
Combining \ref{eq:rel1} and \ref{eq:rel2} yields
\begin{equation}
f(S_k^{(i+1)}) - f(S_k^{(i)}) \geq \frac{\eta_k^{(i+1)}}{Kc}\left(f(O_k)-f(S_k^{(i)})\right).
\end{equation}
Using an inductive argument similar to the one in the proof of Theorem \oldref{thm:card}, 
and noting that $f(\emptyset) = 0$,
\begin{equation}\label{eq:pacb1}
\begin{aligned}
f(S_k) &\geq  \left(1-\left(1- \sum_{i=1}^K\frac{\eta_k^{(i)}}{Kc}\right)\right)f(O_k)\\
&\stackrel{(a)}{\geq} \left(1- e^{-\sum_{i=1}^K\frac{\eta_k^{(i)}}{Kc}}\right)f(O_k),
\end{aligned}
\end{equation}
where to obtain $(a)$ we use the fact that $(1+x)^y\leq e^{xy}$ for $y>0$. 
Therefore, since by assumption 
$\ell_{min}(\epsilon)\leq\ell_{i}(\epsilon)\leq\eta_k^{(i)}\leq 1$, we establish \ref{eq:res1}.

To show the second statement, i.e., prove \ref{eq:pacbound} holds in the setting of 
independent $\{\eta_k^{(i)}\}_{i=1}^K$,
we apply the Bernstein's inequality \cite{hogg1995introduction} to the sum of independent
random variables $\sum_{i=1}^K\eta_k^{(i)}$. Since $\{\eta_k^{(i)}\}$ are bounded random 
variables, from Popoviciu's inequality \cite{hogg1995introduction} for all $i\in[K]$, it follows
that
\begin{equation}
\Var[\eta_k^{(i)}] \leq \frac{1}{4}(1-\ell_{i}(\epsilon))^2.
\end{equation}
Hence, based on the Bernstein's inequality, for all $0<q<1$
\begin{equation}
\Pr\{\sum_{i=1}^K\eta_k^{(i)}< (1-q)\sum_{i=1}^K\mu_i\} <p,
\end{equation}
where 
\begin{equation}
\begin{aligned}
p&=\exp\left(-\frac{(1-q)^2(\sum_{i=1}^K\mu_i(\epsilon))^2}{\frac{1-q}{3}\sum_{i=1}^K\mu_i(\epsilon)+\frac{1}{4}\sum_{i=1}^K(1-\ell_{i}(\epsilon))^2}\right)\\
&\stackrel{(b)}{\leq} \exp\left(-\frac{K(1-q)^2\mu_{min}^2(\epsilon)}{\frac{1-q}{3}\mu_{min}(\epsilon)+\frac{1}{4}(1-\ell_{min}(\epsilon))^2}\right)\\
&\stackrel{}{=} e^{-C(\epsilon,q)K},
\end{aligned}
\end{equation}
where $(b)$ follows because $p$ increases as we replace $\mu_i(\epsilon)$ and 
$\ell_{i}(\epsilon)$ by their lower bounds. Finally, substituting this results in \ref{eq:pacb1} 
yields 
\begin{equation}
f(S_k) \geq\left(1- e^{-\frac{(1-q)\mu_{min}(\epsilon)}{c}}\right)f(O_k),
\end{equation}
with probability at least $1-e^{C(\epsilon,q)K}$. This completes the proof.
\end{proof}
Our simulation studies in Section \oldref{sec:sim} empirically confirm the results of Theorems 
\oldref{thm:card} and \oldref{thm:pac}, and illustrate that Algorithm \oldref{alg:greedy} performs favorably 
compared to the competing schemes both on average as well as for each individual sensor 
scheduling task.

%%%%%%%%%%%%%%%%%%%%%%%%%%%%%%%%%%%%%%%%%%%%%%%%%%%%%%%%%%%%%%%%%%%%%%%%%%
Similar to Corollary \oldref{col:mse}, we can now obtain a probabilistic bound on the 
MSE \ref{eq:mse} achievable at each time step using the proposed 
randomized greedy algorithm. This result is stated in Corollary \oldref{col:pac} below.

\begin{corollary}\label{col:pac}
Consider the notation and assumptions of Corollary \oldref{col:mse} and Theorem \oldref{thm:pac}. Let $0<q<1$ and define $\alpha = 1- \exp(-\frac{(1-q)\mu_{min}(\epsilon)}{c})$. Then, with probability at least $1-e^{-CK}$ it holds that
	\begin{equation}\label{eq:pac1}
\mathrm{MSE}_{S_k}\leq\alpha \mathrm{MSE}_{o} + (1-\alpha) \mathrm{Tr}(\P_{k|k-1}),
	\end{equation}
	for some $C>0$.
\end{corollary}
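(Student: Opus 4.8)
The plan is to obtain this probabilistic MSE bound directly from the high-probability objective bound already established in Theorem~\oldref{thm:pac}, exactly paralleling how Corollary~\oldref{col:mse} was derived from the expectation bound of Theorem~\oldref{thm:card}. The entire corollary reduces to translating the guarantee on $f(S_k)$ into a guarantee on $\mathrm{MSE}_{S_k}$ through the affine relationship that links the two quantities, so no new estimation or probabilistic machinery is required.

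First I would recall the definition $f(S) = \mathrm{Tr}(\P_{k|k-1}-\F_S^{-1})$ together with the identity $\mathrm{MSE}_{S} = \mathrm{Tr}(\F_S^{-1})$ from \ref{eq:mse}. Combining these gives the affine relation
\begin{equation*}
f(S) = \mathrm{Tr}(\P_{k|k-1}) - \mathrm{MSE}_{S}
\end{equation*}
for every subset $S$, and in particular $f(O_k) = \mathrm{Tr}(\P_{k|k-1}) - \mathrm{MSE}_{o}$, since maximizing $f$ over schedules of size $K$ is equivalent to minimizing the MSE.

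Next I would invoke the second statement of Theorem~\oldref{thm:pac}, namely the bound \ref{eq:pacbound}: for any $0<q<1$, with probability at least $1-e^{-CK}$ the selected set obeys $f(S_k)\geq \alpha\, f(O_k)$ with $\alpha = 1-\exp(-\tfrac{(1-q)\mu_{min}(\epsilon)}{c})$. Substituting the affine relation into both sides yields, on the same event,
\begin{equation*}
\mathrm{Tr}(\P_{k|k-1}) - \mathrm{MSE}_{S_k} \geq \alpha\left(\mathrm{Tr}(\P_{k|k-1}) - \mathrm{MSE}_{o}\right),
\end{equation*}
and rearranging produces precisely $\mathrm{MSE}_{S_k}\leq \alpha\,\mathrm{MSE}_{o} + (1-\alpha)\,\mathrm{Tr}(\P_{k|k-1})$, which is \ref{eq:pac1}.

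There is essentially no hard step here: the MSE bound is a deterministic algebraic consequence of the $f$-bound, so the probabilistic content is inherited verbatim and the $1-e^{-CK}$ confidence carries over unchanged because the transformation between the two events is one-to-one. The only point worth verifying is that $0<\alpha<1$, which holds since $\alpha = 1-e^{-(\text{positive})}$; this guarantees that the right-hand side is a genuine convex combination of the optimal MSE $\mathrm{MSE}_{o}$ and the prediction MSE $\mathrm{Tr}(\P_{k|k-1})$, making the interpolation meaningful as a performance characterization.
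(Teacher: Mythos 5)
Your proof is correct and is essentially the argument the paper intends: the paper gives no explicit proof of this corollary, stating only that it follows "similar to Corollary \oldref{col:mse}," and your derivation—substituting the affine identity $f(S)=\mathrm{Tr}(\P_{k|k-1})-\mathrm{MSE}_S$ into the high-probability bound \ref{eq:pacbound} of Theorem \oldref{thm:pac} and rearranging—is exactly that translation, with the probability $1-e^{-CK}$ carried over unchanged since the rearrangement is deterministic.
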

%%%%%%%%%%%%%%%%%%%%%%%%%%%%%%%%%%%%%%%%%%%%%%%%%%%%%%%%%%%%%%%%%%%%%%%%%%
%%%%%%%%%%%%%%%%%%%%%%%%%%%%%%%%%%%%%%%%%%%%%%%%%%%%%%%%
%%%%%%%%%%%%%%%%%%%%%%%Simulations%%%%%%%%%%%%%%%%%%%%%%
%%%%%%%%%%%%%%%%%%%%%%%%%%%%%%%%%%%%%%%%%%%%%%%%%%%%%%%%
\vspace{-0.2cm}
\section{Simulation Results}\label{sec:sim}
To test the performance of the proposed randomized greedy algorithm, we compare it with the classic greedy algorithm and the SDP relaxation in a variety of settings as detailed next. We implemented the greedy and randomized greedy algorithms in MATLAB and the SDP relaxation scheme via CVX \cite{grant2008cvx}. All experiments were run on a laptop with 2.0 GHz Intel Core i7-4510U CPU and 8.00 GB of RAM.
%%%%%%%%%%%%%%%%%%%%%%%%%%%%%%%%%%%%%%%%%%%%%%%%%%%%%%
\vspace{-0.2cm}
\subsection{Kalman filtering in random sensor networks}
We first consider the problem of state estimation in a linear time-varying system via Kalman filtering. 
For simplicity, we assume the state transition matrix to be identity, i.e., $\mathbf{A}_k = \mathbf{I}_m$.
At each time step, 
the measurement vectors, i.e., the rows of the measurement matrix $\H_k$, are drawn according to 
$\mathcal{N} \sim (0,\frac{1}{m}\I_m)$. The
initial state is a zero-mean Gaussian random vector with covariance $\mathbf{\Sigma_x} = \I_m$; and the process and measurement noise are zero-mean 
Gaussian with covariance matrices $\mathbf{Q}=0.05\I_m$ and $\mathbf{R}=0.05\I_n$, respectively.

The MSE of the filtered estimator and running time of each scheme is averaged over 100 Monte-Carlo simulations. The 
time horizon for each run is $T = 10$. 

%===================================
%\subsection{Kalman Filtering over Time}
We first consider a system having state dimension $m = 50$ and the total number of sensors $n = 400$. We set a
constraint on the number of sensors allowed to be queried at each time step to $K = 55$ and compare the MSE 
achieved by each sensor selection method over the time horizon of interest. For the randomized greedy 
algorithm we set $\epsilon = 0.001$. Fig. \oldref{fig:overtime} shows that the greedy method consistently yields the 
lowest estimation MSE while the MSE provided by the randomized greedy algorithm is slightly higher. The MSE 
performance achieved by solving the SDP relaxation is considerably larger than those of the greedy and randomized 
greedy algorithms. The time it takes each method to select $K$ sensors is given in Table \oldref{tab:sim1_time}. 
Both the greedy algorithm and the randomized greedy algorithm are much faster than the SDP formulation.
Moreover, the randomized greedy scheme is nearly two times faster than the greedy method.
\begin{table}[b]
\centering
\begin{tabular}{|c|c|c|}
\hline
Randomized Greedy & Greedy & SDP Relaxation \\ 
\hline
0.20 s & 0.38 s & 249.86 s\\
\hline
\end{tabular}
\caption{Running time comparison of the randomized greedy, greedy, and SDP relaxation sensor selection schemes 
($m=50$, $n=400$, $K=55$, $\epsilon=0.001$).}
\vspace{-0.4cm}
\label{tab:sim1_time}
\end{table}
%===================================
\begin{figure}[t]
\centering
    \includegraphics[width=0.49\textwidth]{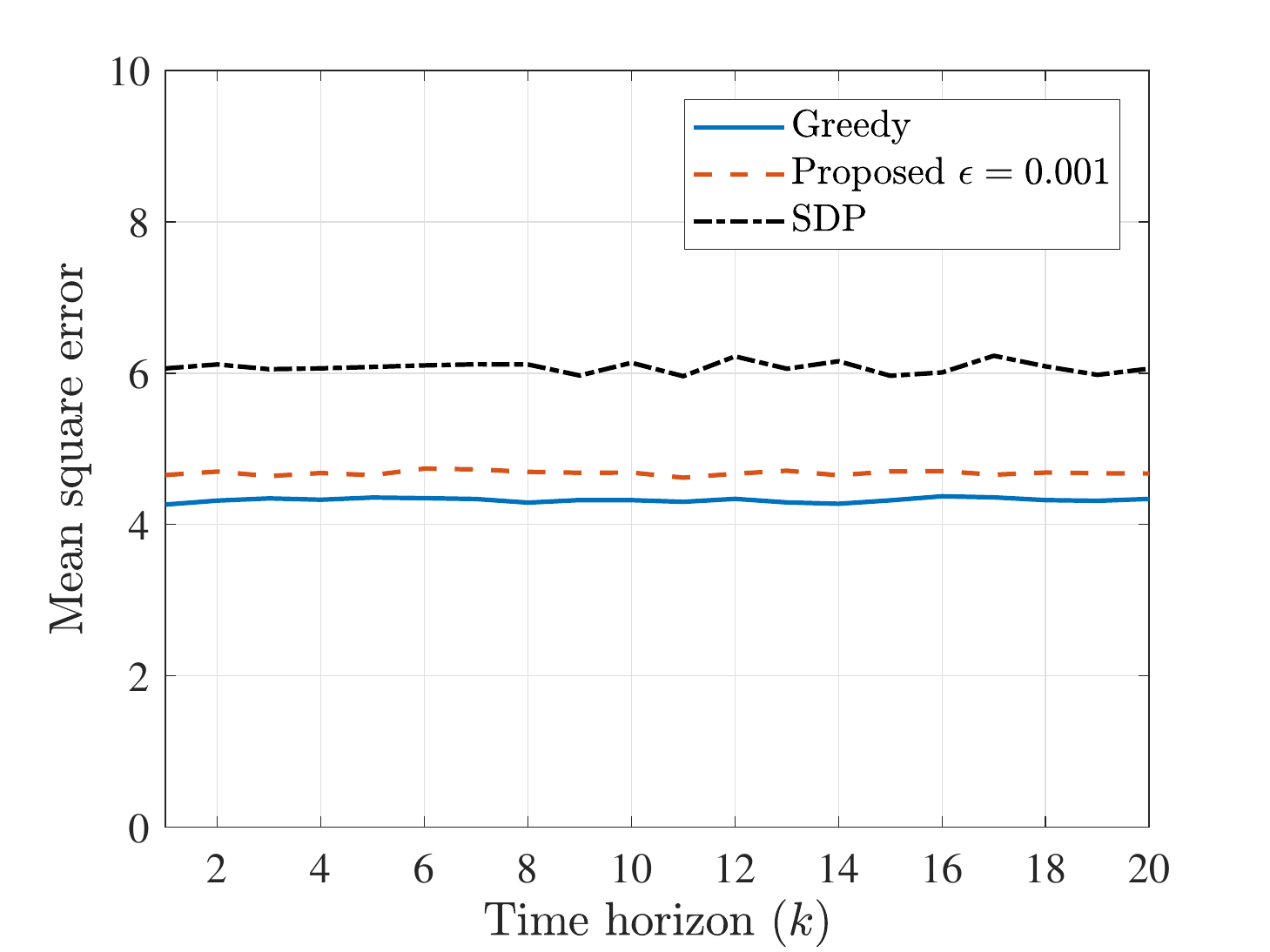}
    \caption{MSE comparison of randomized greedy, greedy, and SDP relaxation sensor selection schemes employed in Kalman filtering.}
\label{fig:overtime}
\vspace{-0.3cm}
\end{figure}
Note that, in this example, in each iteration of the sensor selection procedure the randomized scheme only computes 
the marginal gain for a sampled subset of size 50. In contrast, the classic greedy approach computes the marginal gain 
for all 400 sensors. In summary, the greedy method yields slightly lower MSE but is much slower than the proposed 
randomized greedy algorithm.

%===================================
%\subsection{Effect of Number of Selected Sensors}
To study the effect of the number of selected sensors on the MSE performance, we vary $K$ from 55 to 115 with 
increments of 10. The MSE values at the last time step for each algorithm are shown in Fig. \oldref{fig:kvary}(a). 
As the number of selected sensors increases, the estimation becomes more accurate, as reflected by the MSE 
of the estimates provided by each algorithm. Moreover, the differences between the MSE values achieved by
different schemes monotonically decrease as more sensors are selected. The sensor selection running times  shown in 
Fig. \oldref{fig:kvary}(b) indicate that the randomized greedy scheme is nearly twice as fast as the greedy method, 
while the SDP method is orders of magnitude slower than both greedy and randomized greedy algorithms. 
%Hence, for state estimation in large networks, one can achieve state-of-the-art estimation error with significantly lower complexity by employing the proposed randomized greedy algorithm.

%===================================
\begin{figure}[t]
\centering
    \begin{subfigure}{.49\textwidth}
  \centering
    \includegraphics[width=1\textwidth]{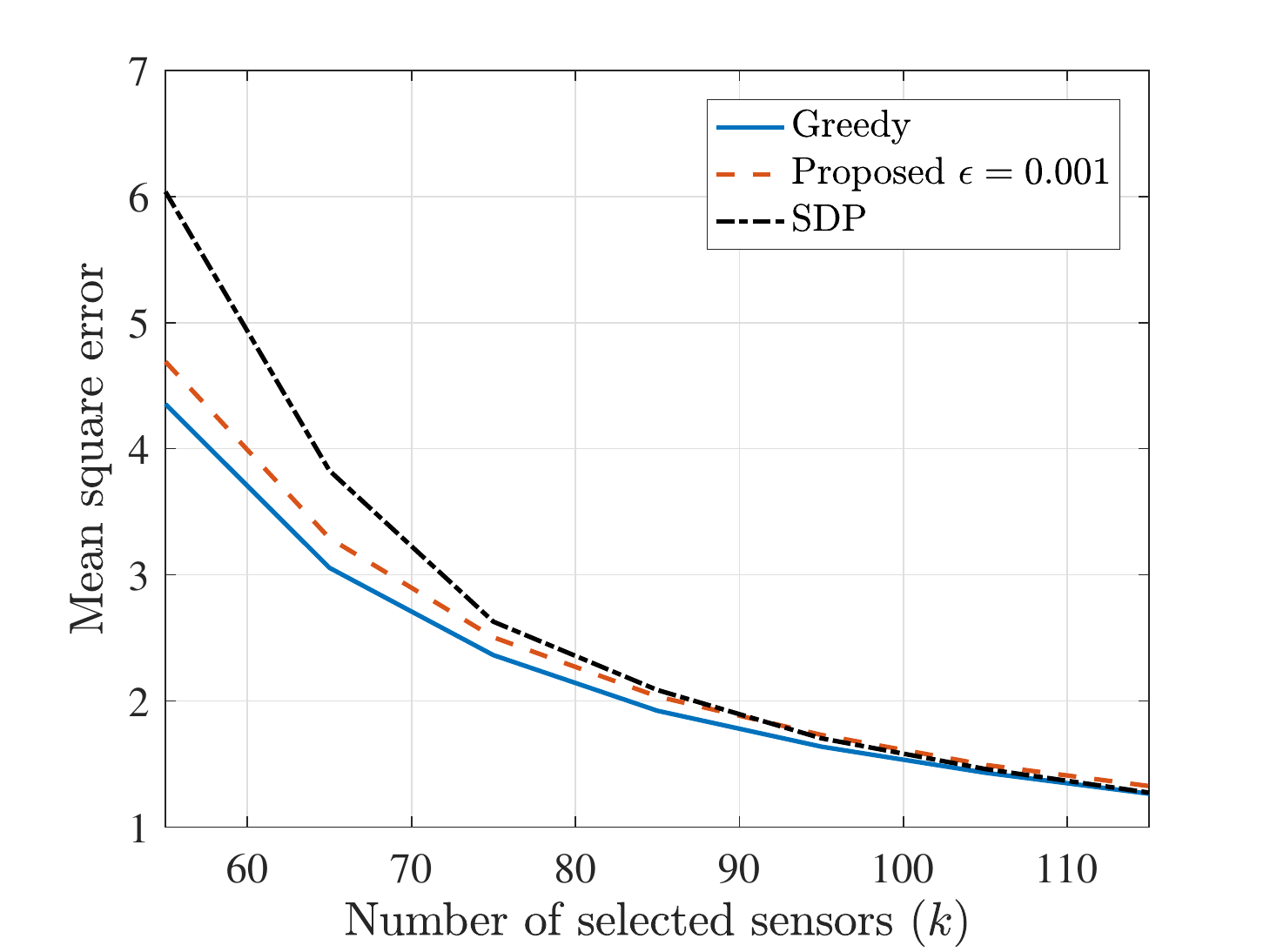}\caption{\scriptsize Comparing MSE performance of different 
    schemes.}
        \end{subfigure}
        \begin{subfigure}{.49\textwidth}
  \centering
    \includegraphics[width=1\textwidth]{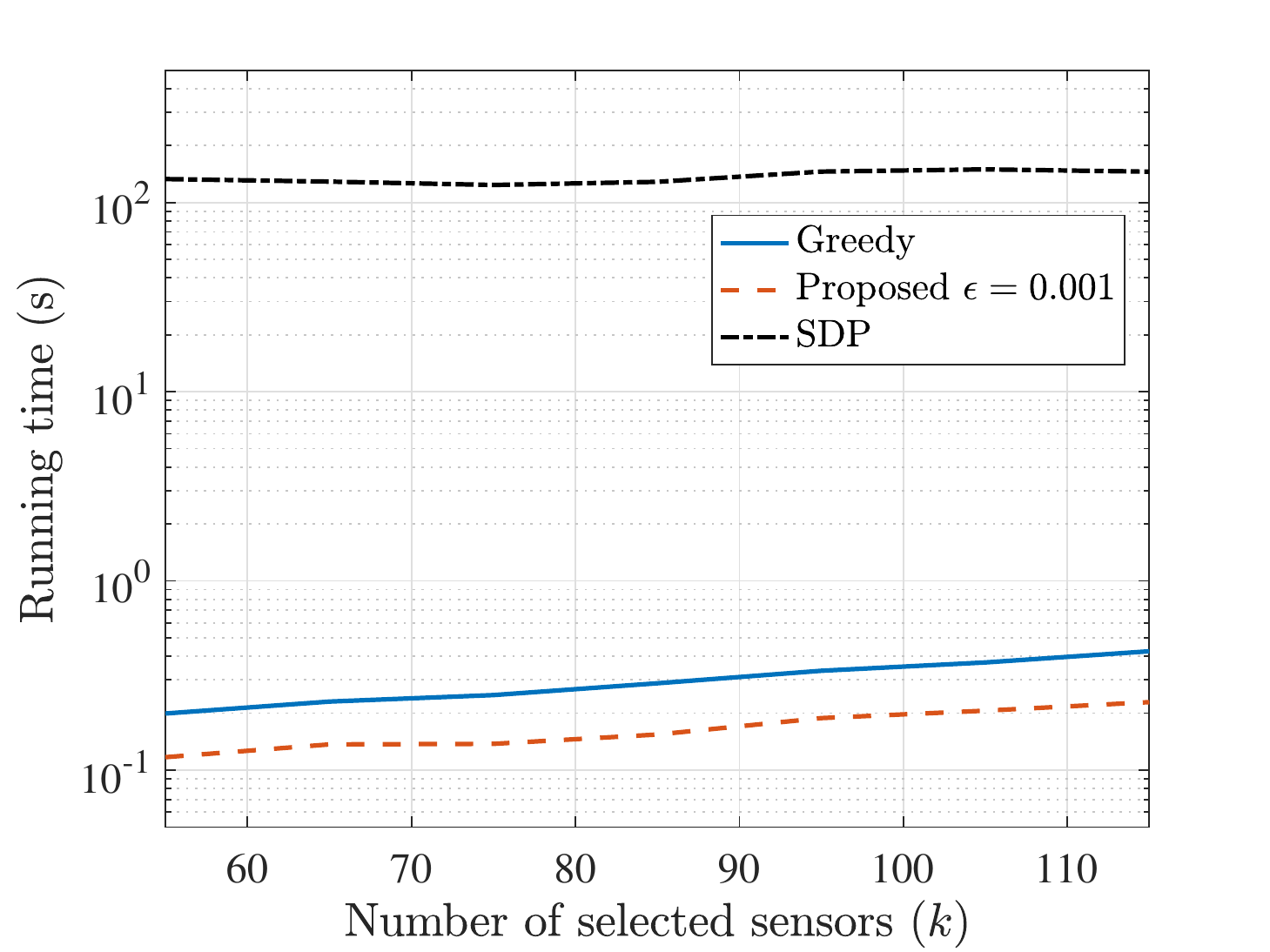}\caption{\scriptsize Running time comparison.}
\end{subfigure}
\caption{Comparison of randomized greedy, greedy, and SDP relaxation schemes as the number of selected sensors increases.}
\label{fig:kvary}
\vspace{-0.4cm}
\end{figure}
%%%%%%%%%%%%%%
%%%%%%%%%%%%%%
Finally, to empirically verify the results of Theorem \oldref{thm:pac}, in Fig. \oldref{fig:hists} we compare histograms of 
MSE achieved by the greedy and the proposed randomized greedy sensor selection schemes with various choices of 
$\epsilon$ when $K = 60$. As the figure shows, the MSE of sets selected by the proposed scheme is relatively close 
to that selected by state-of-the-art greedy algorithm. In addition, as $\epsilon$ decreases, the MSE of the randomized 
greedy algorithm approaches that of the greedy algorithm. These empirical observations coincide with our theoretical 
results in Theorem \oldref{thm:pac}. That is, the proposed algorithm, although a randomized scheme, returns a 
near-optimal subset of sensors for each individual sensor selection task.
\begin{figure}[t]
	\centering
	\includegraphics[width=0.49\textwidth]{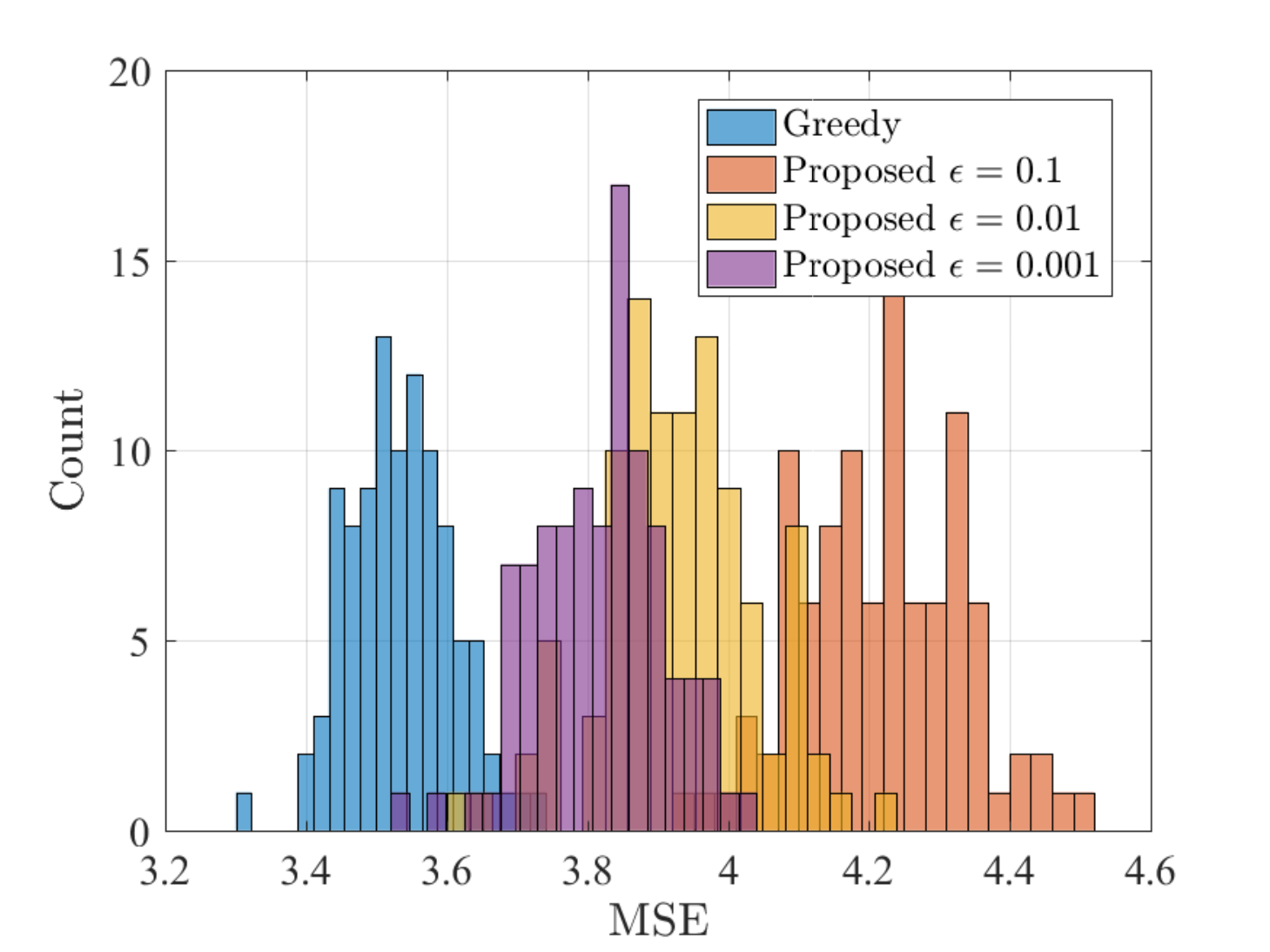}
	\caption{Histogram of MSE values for 100 independent realization of a sensor scheduling task for a sensor network with $m = 50$, $K=60$, and $n = 400$.}
	\label{fig:hists}
	\vspace{-0.4cm}
\end{figure}
%%%%%%%%%%%%%%%%%%%%%%%%%%%%%%%%%%%%%%%%%%%%%%%%%%%%%%
\vspace{-0.2cm}
\subsection{State estimation in large-scale networks}
Next, we compare the performance of the randomized greedy algorithm to that of the greedy algorithm as the size 
of the system increases. We run both methods for 20 different system dimensions. The initial dimensions are set to 
$m = 20$, $n=200$, and $K=25$ and all three parameters are scaled by $\gamma$ where $\gamma$ varies from 1 to 20. 
In addition, to evaluate the effect of $\epsilon$ on the performance and runtime of the randomized greedy approach, 
we repeat experiments for $\epsilon \in \{0.1, 0.01, 0.001\}$. Note that the computational complexity of the SDP 
relaxation scheme is prohibitive in this setting and hence it is omitted. Fig. \oldref{fig:gammavary}(a) illustrates the 
MSE comparison of the greedy and randomized greedy schemes.
It shows that the 
difference between the MSEs is negligible.
%reduces as the system scales up.
 The running time is plotted in 
Fig. \oldref{fig:gammavary}(b). As the figure illustrates, the gap between the running times grows with the size of the 
system and the randomized greedy algorithm performs nearly 28 times faster than the greedy method for the largest 
network. Fig. \oldref{fig:gammavary} shows that using a smaller $\epsilon$ results in a lower MSE  while it slightly 
increases the running time. These results suggest that, for large systems, the randomized greedy provides almost 
the same MSE while being much faster than the greedy algorithm.
%===================================
\begin{figure}[t]
\centering
    \begin{subfigure}{.49\textwidth}
  \centering
    \includegraphics[width=1\textwidth]{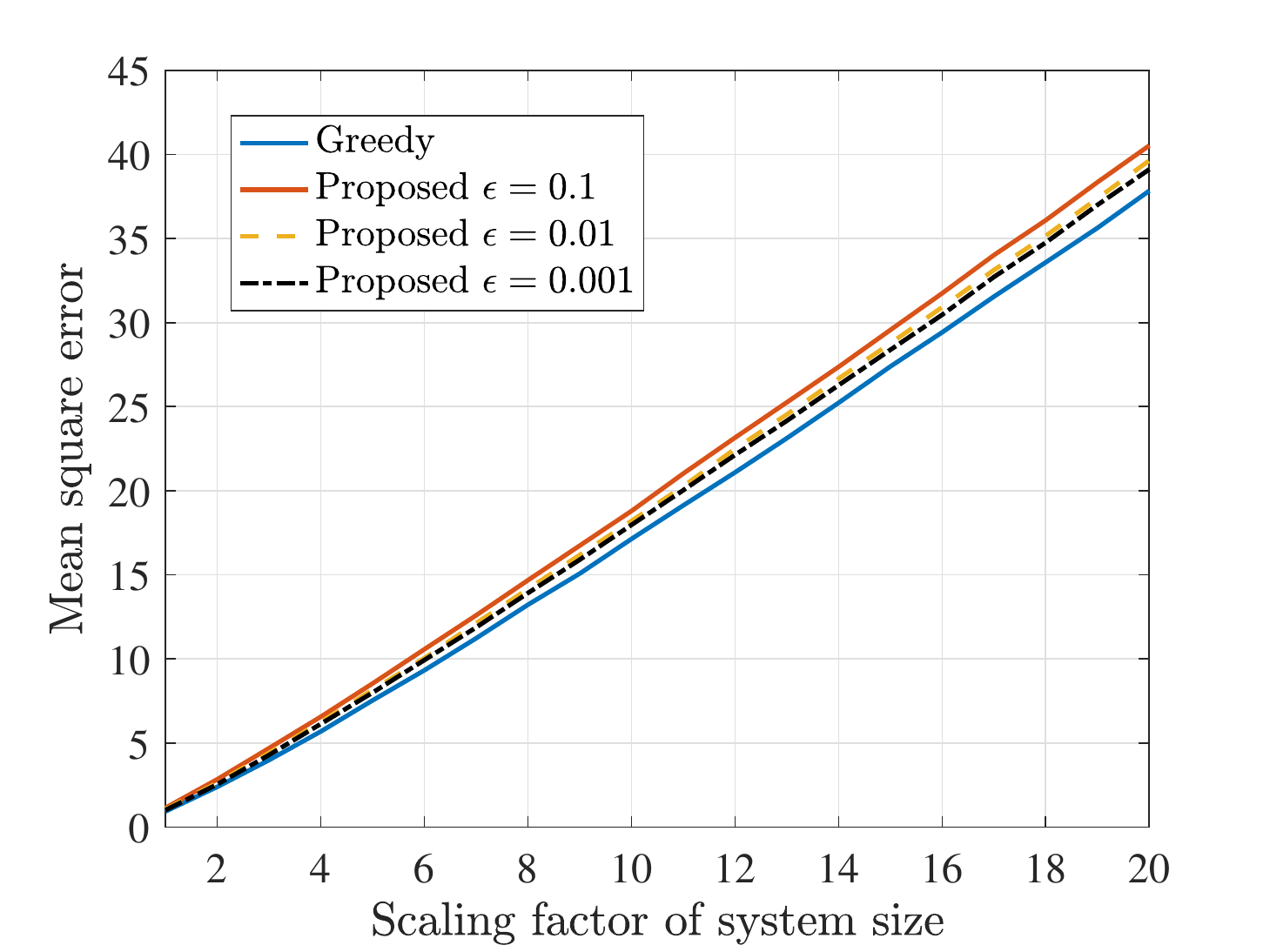}\caption{\scriptsize {\color{black}{Comparing MSE performance of different 
    			schemes.}}}
        \end{subfigure}
        \begin{subfigure}{.49\textwidth}
  \centering
    \includegraphics[width=1\textwidth]{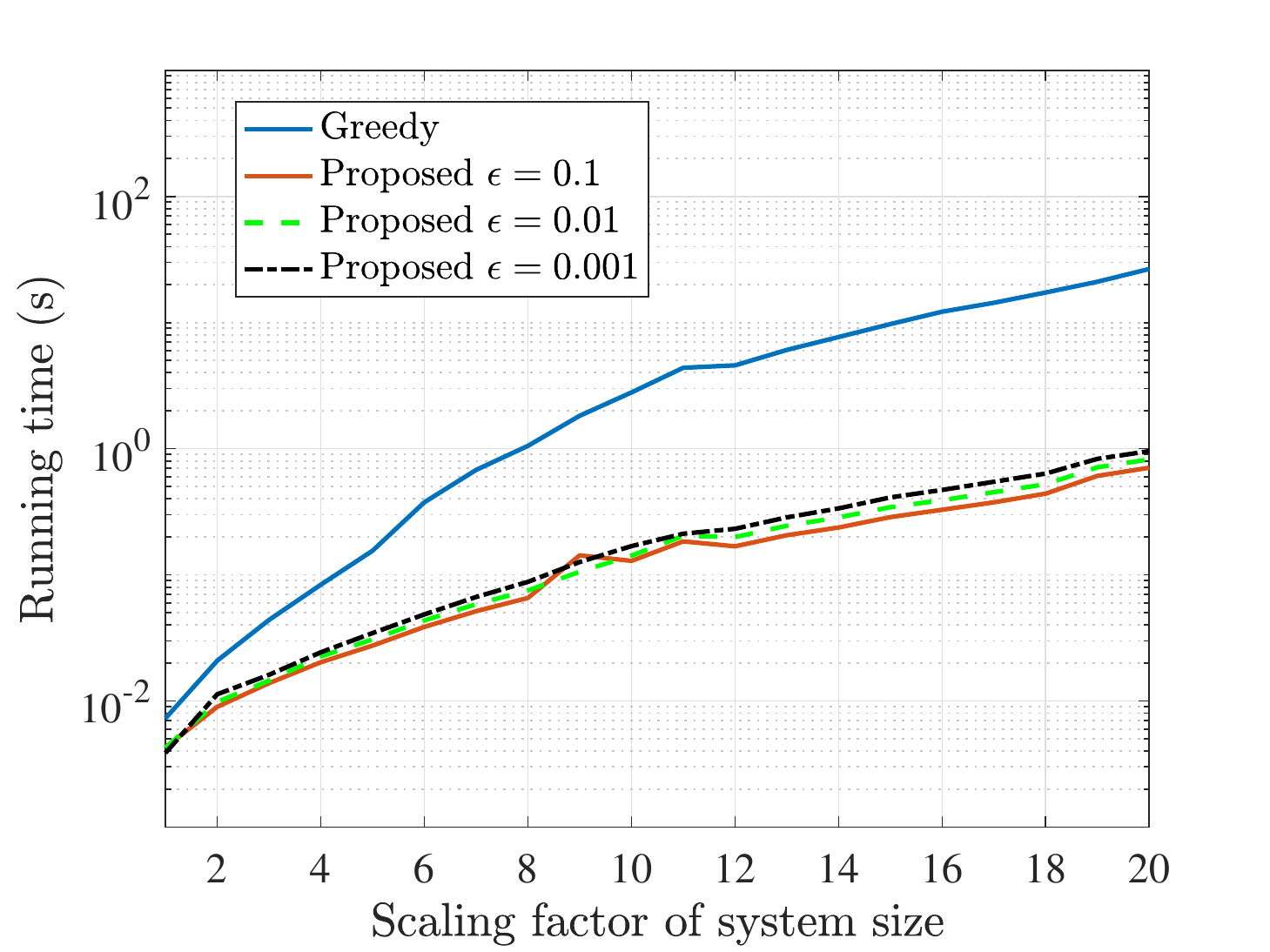}\caption{\scriptsize Running time comparison.}
\end{subfigure}
\caption{A comparison of the randomized greedy and greedy algorithms for varied network size.}
\label{fig:gammavary}
\vspace{-0.4cm}
\end{figure}
%%%%%%%%%%%%%%%%%%%%%%%%%%%%%%%%%%%%%%%%%%%%%%%%%%%%%%
\vspace{-0.2cm}
\subsection{Accelerated multi-object tracking}
\begin{figure}[t]
	\centering
	\begin{subfigure}{.49\textwidth}
		\centering
		\includegraphics[width=1\textwidth]{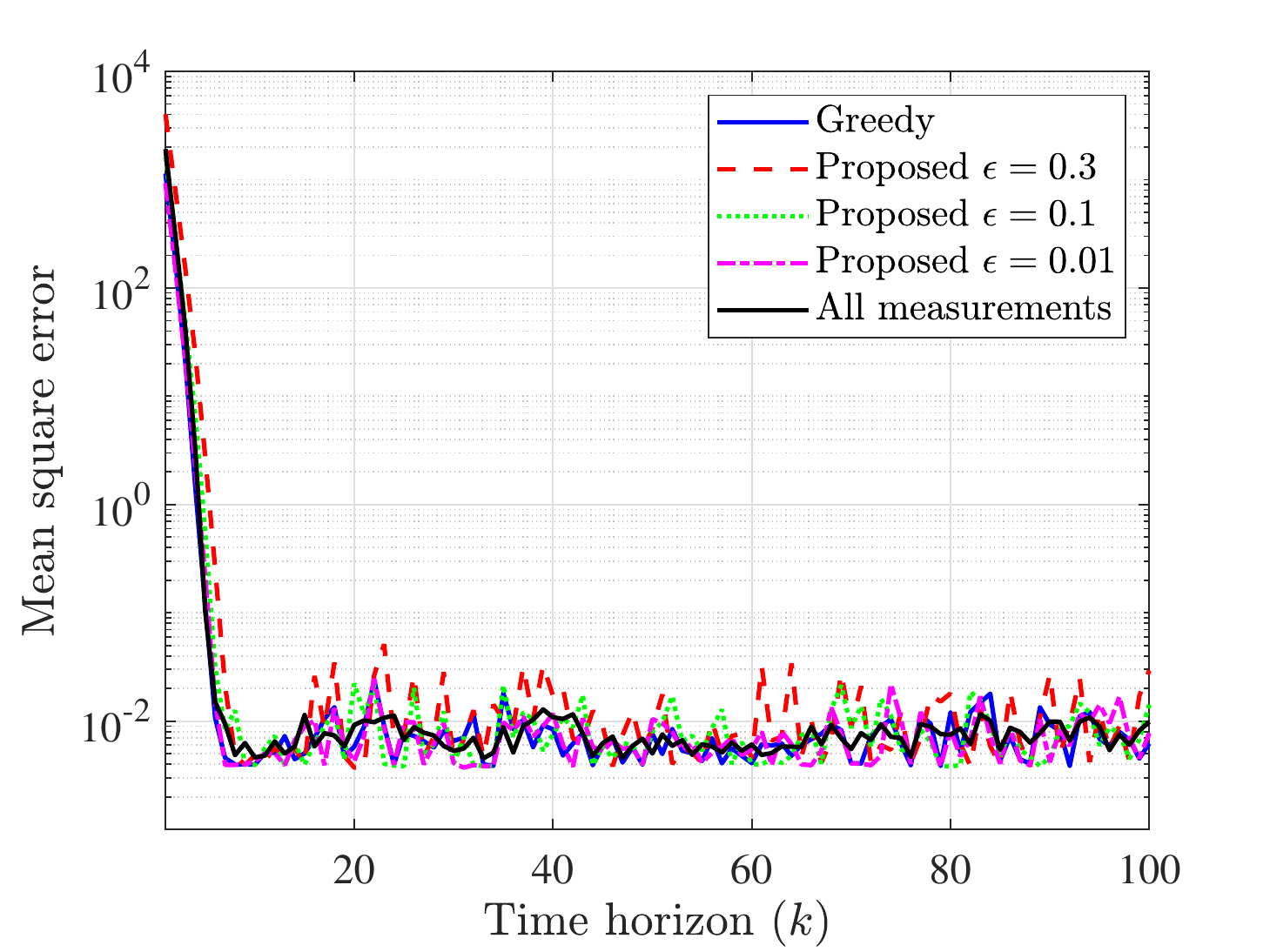}\caption{\scriptsize Comparing MSE performance of different 
			schemes.}
	\end{subfigure}
	\begin{subfigure}{.49\textwidth}
		\centering
		\includegraphics[width=1\textwidth]{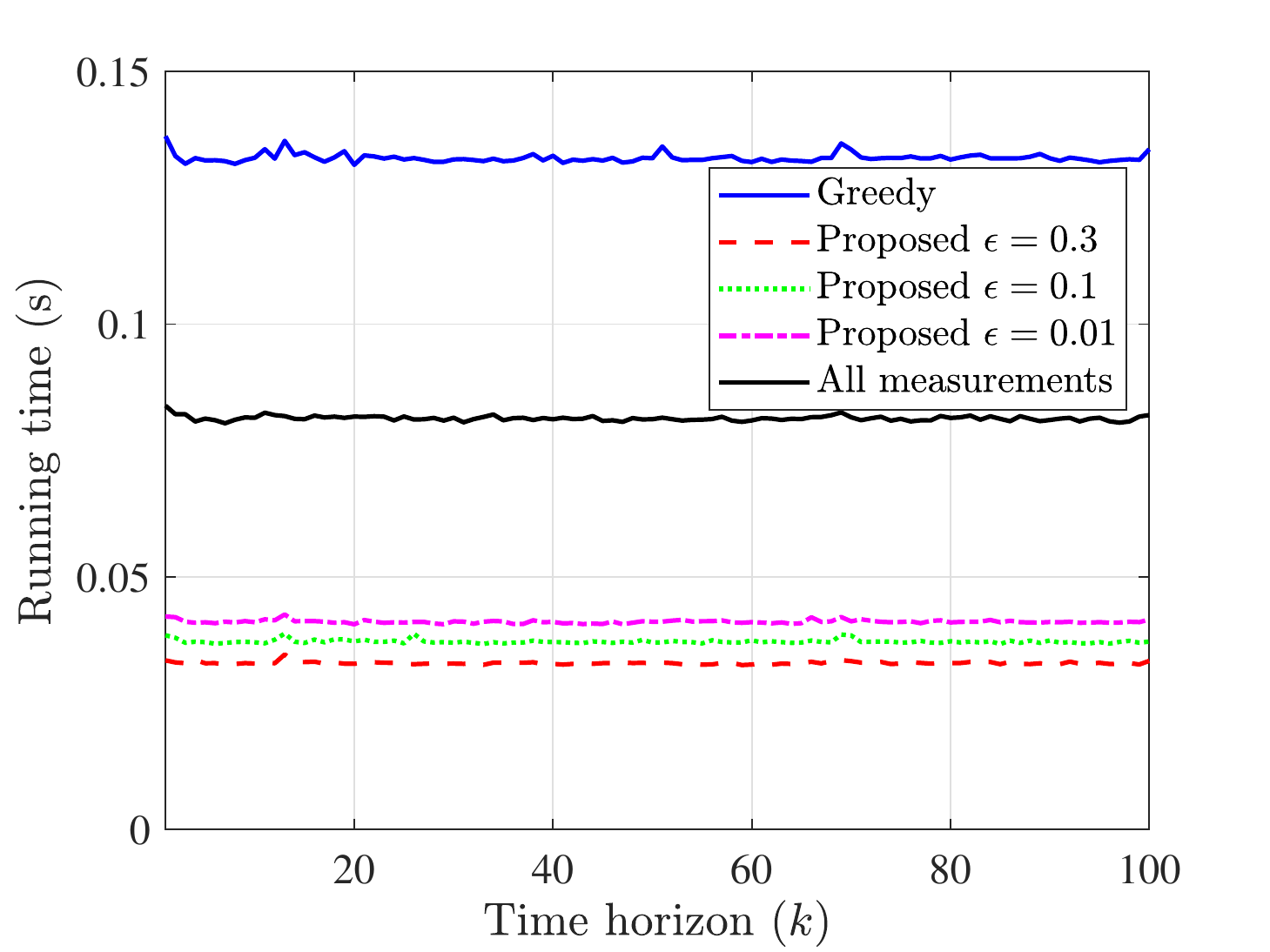}\caption{\scriptsize Running time comparison.}
	\end{subfigure}
\caption{A comparison of the randomized greedy and greedy algorithms for a multi-object tracking application.}
\label{fig:uavs}
\vspace{-0.4cm}
\end{figure}
Finally, we study the multi-object tracking application introduced in Section \oldref{sec:uav}. 
Specifically, we consider a scenario where twenty moving objects are initially uniformly distributed in a $5\times 10$ area. At each time instance, the objects move in a random direction with a constant velocity set to 0.2. 
Twenty UAVs, equidistantly spread over the area, move according to a periodic \textit{parallel-path} search pattern \cite{vincent2004framework}. 
The initial phases of the UAVs' motions are uniformly distributed to provide a better coverage of the area. 
The UAVs can acquire range and angular measurements of the objects that are within the maximum radar detection range. 
The maximum radar detection range is set such that at each time step the UAVs together collect approximately 600 range 
and angular measurements. The communication bandwidth constraints limit the number of measurements
transmitted to the control unit to $K = 100$. Note that since the radar measurement model is nonlinear, the 
control unit tracks objects via the extended Kalman filter. 
Fig. \oldref{fig:uavs} shows a comparison in terms of the MSE and running time between the greedy and randomized greedy schemes for various values of $\epsilon$.
In the same figure we show performance of the scheme that ignores communications constraints 
and uses all the available measurements gathered by the UAVs. As Fig. \oldref{fig:uavs}(a) illustrates, the 
MSE performance of the greedy and proposed schemes are relatively close and similar to the performance
of the scheme that uses all the measurements. However, a closer look at the running time comparison shown in 
Fig. \oldref{fig:uavs}(b) reveals that the combined runtime of randomized greedy sensor selection and Kalman 
filtering tasks is approximately $2$ times faster than the runtime of the Kalman filter that uses all the 
measurements, and approximately $4$ times faster than the combined runtime of the classical greedy sensor
selection and Kalman filtering. Therefore, not only does the proposed scheme satisfy the communication 
constraint and perform nearly as well as using all the measurements, but it also significantly reduces the time 
needed to perform sensor selection and process the selected measurements in extended Kalman filtering.
% %%%%%%%%%%%%%%%%%%%%%%%%%%%%%%%%%%%%%%%%%%%%%%%%%%%%%%%%
% %%%%%%%%%%%%%%%%%%%%%%%Conclusion%%%%%%%%%%%%%%%%%%%%%%%
% %%%%%%%%%%%%%%%%%%%%%%%%%%%%%%%%%%%%%%%%%%%%%%%%%%%%%%%%
\vspace{-0.2cm}
\section{Conclusion} \label{sec:concl}
In this paper, we studied the problem of state estimation in large-scale linear time-varying dynamical systems. 
We proposed a randomized greedy algorithm for selecting sensors to query such that their choice minimizes the 
estimator's mean-square error at each time step. We established the performance guarantee for the proposed 
algorithm and analyzed its computational complexity. To our knowledge, the proposed scheme is the first 
randomized algorithm for sensor scheduling with an explicit bound on its achievable mean-square error.
In addition, we provided a probabilistic theoretical bound on the element-wise curvature of the objective function. 
Furthermore, in several simulated settings we demonstrated that the proposed algorithm is superior to the classical 
greedy and SDP relaxation methods in terms of running time while providing the same or better utility.

As a future work, it is of interest to extend this approach to nonlinear dynamical systems and obtain theoretical 
guarantees on the quality of the resulting approximate solution found by the randomized greedy algorithm. 
Moreover, it would be of interest to extend the framework established in this manuscript to related problems 
such as the minimal actuator placement.
% %%%%%%%%%%%%%%%%%%%%%%%%%%%%%%%%%%%%%%%%%%%%%%%%%%%%%%%%
% %%%%%%%%%%%%%%%%%%%%%%%Appendices%%%%%%%%%%%%%%%%
% %%%%%%%%%%%%%%%%%%%%%%%%%%%%%%%%%%%%%%%%%%%%%%%%%%%%%%%%
%\appendix
%\clearpage
%%%%%%%%%%%%%%%%%%%%%%%%%%%%%%%%%%%%%%%%%%%%%%%%%%%%%%%%
%%%%%%%%%%%%%%%%%%%%%%%References%%%%%%%%%%%%%%%%
%%%%%%%%%%%%%%%%%%%%%%%%%%%%%%%%%%%%%%%%%%%%%%%%%%%%%%%%
%\newpage
% %%%%%%%%%%%%%%%%%%%%%%%%%%%%%%%%%%%%%%%%%%%%%%%%%%%%%%%%
%\newpage
\normalsize
\vspace{-0.2cm}
\begin{appendices}
%%%%%%%%%%%%%%%%%%%%%%%%%%%%%%%%%%%%%%%%%%%%%%%%%%%%%%%%%%%%%
\vspace{-0.3cm}
\section{Proof of Proposition \oldref{p:mono}}\label{pf:mono}
First, note that 
\begin{equation}
f(\emptyset)=\mathrm{Tr}\left(\P_{k|k-1}-\F_{\emptyset}^{-1}\right)=\mathrm{Tr}\left(\P_{k|k-1}-\P_{k|k-1}\right)=0.
\end{equation}
Now, for $j\in[n]\backslash S$ it holds that
\begin{equation}\label{eq:update}
\begin{aligned}
f_j(S) &= f(S\cup \{j\}) -f(S) \\
&= \mathrm{Tr}\left(\P_{k|k-1}-\F_{S\cup \{j\}}^{-1}\right) -  \mathrm{Tr}\left(\P_{k|k-1}-\F_S^{-1}\right)\\
&= \mathrm{Tr}\left(\F_S^{-1}\right) -\mathrm{Tr}\left(\F_{S\cup \{j\}}^{-1}\right)\\
&=  \mathrm{Tr}\left(\F_S^{-1}\right)-\mathrm{Tr}\left(\left(\F_S+\sigma_j^{-2}\h_{k,j}\h_{k,j}^\top\right)^{-1}\right)\\
&\stackrel{(a)}{=} 
%\mathrm{Tr}(\F_S^{-1})-\mathrm{Tr}(\F_S^{-1})+
\mathrm{Tr}\left(\frac{\F_S^{-1}\h_{k,j}\h_{k,j}^\top\F_S^{-1}}{\sigma_j^{2}+\h_{k,j}^\top\F_S^{-1}\h_{k,j}}\right)  \\
&\stackrel{(b)}{=} \frac{\h_{k,j}^\top\F_S^{-2}\h_{k,j}}{\sigma_j^{2}+\h_{k,j}^\top\F_S^{-1}\h_{k,j}}
\end{aligned}
\end{equation}
where $(a)$ is obtained by applying matrix inversion lemma (Sherman-Morrison formula) 
\cite{bellman1997introduction} to $(\F_S+\sigma_j^{-2}\h_{k,j}\h_{k,j}^\top)^{-1}$, and $(b)$ follows from the 
properties of the matrix trace operator. Finally, since $\F_S$ is a symmetric positive definite matrix, 
$f_j(S) > 0$ which in turn implies monotonicity.
%%%%%%%%%%%%%%%%%%%%%%%%%%%%%%%%%%%%%%%%%%%%%%%%%%%%%%%%%%%%%
\vspace{-0.4cm}
\section{Proof of Lemma \oldref{lem:curv}}\label{pf:curv}
Let $S \subset T$ and $T \backslash S=\{j_1,\dots,j_r\}$. Therefore,
\begin{multline}
f(T)-f(S)=f(S\cup \{j_1,\dots,j_r\})-f(S)\\
\hspace{0.5cm}={f_{j_1}(S)}+{f_{j_2}(S\cup\{j_1\})}+\dots\\+{f_{j_r}(S\cup\{j_1,\dots,j_{r-1}\})}.
\end{multline}
Definition of the element-wise curvature implies that
\begin{equation}\label{ds}
\begin{aligned}
f(T)-f(S)&\leq {f_{j_1}(S)}+{\cal C}_1{f_{j_2}(S)}+\dots+{\cal C}_{r-1}{f_{j_r}(S)}\\
&={f_{j_1}(S)}+\sum_{l=1}^{r-1}{\cal C}_l{f_{j_t}(S)}.
\end{aligned}
\end{equation}
Note that \ref{ds} is established for a specific ordering of elements in $T \backslash S$. Given 
an ordering $\{j_1,\dots,j_r\}$, one can form a set $P=\{{\cal P}_1,\dots,{\cal P}_r\}$ of $r$ permutations 
(e.g., by defining the right circular-shift operator ${\cal P}_t(\{j_1,\dots,j_r\})=\{j_{r-t+1},\dots,j_1,\dots\}$ for 
$1 \leq t\leq r$) such that ${\cal P}_p(j)\neq{\cal P}_q(j)$ for $p\neq q$ and $\forall j\in T \backslash S$;
\ref{ds} holds for each such permutation. By summing the corresponding $r$ inequalities we obtain
\begin{equation}\label{qs}
r(f(T)-f(S))\leq \left(1+\sum_{l=1}^{r-1}{\cal C}_l\right)\sum_{j\in T \backslash S}{f_{j}(S)}.
\end{equation}
Rearranging \ref{qs} yields the desired result.
%%%%%%%%%%%%%%%%%%%%%%%%%%%%%%%%%%%%%%%%%%%%%%%%%%%%%%%%%%%%%
\vspace{-0.4cm}
\section{Proof of Lemma \oldref{lem:rand}}\label{pf:rand}
First, we aim to bound the probability of an event that a random set $R$ contains at least one index from 
the optimal set of sensors which is a necessary condition to reach the optimal MSE. Let us consider $S_t^{(i)}$, 
the set of sensors selected by the end of $i\ts{th}$ iteration of Algorithm \oldref{alg:greedy} and let 
$\Phi = R \cap (O_k\backslash S_t^{(i)})$. It holds 
that\footnote{Without a loss of generality, we assume that $s$ is an integer.}
\begin{equation}
\begin{aligned}
\Pr\{\Phi = \emptyset\}& = \prod_{l = 0}^{s-1} \left(1-\frac{|O_k\backslash S_k^{(i)}|}{|[n]\backslash S_k^{(i)}|-l}\right)\\
&\stackrel{(a)}{\leq}\left(1-\frac{|O_k\backslash S_k^{(i)}|}{s}\sum_{l=0}^{s-1}\frac{1}{|[n]\backslash S_k^{(i)}|-l}\right)^s\\
&\stackrel{(b)}{\leq}\left(1-\frac{|O_k\backslash S_k^{(i)}|}{s}\sum_{l=0}^{s-1}\frac{1}{n-l}\right)^s
\end{aligned}
\end{equation}
where $(a)$ holds due to the inequality of arithmetic and geometric means, and $(b)$ holds since 
$|[n]\backslash S_i|\leq n$. Now recall that for any integer $p$,
\begin{equation}\label{eq:har}
H_p=\sum_{l=1}^p\frac{1}{p}=\log p + \gamma+\zeta_p,
\end{equation}
where $H_p$ is the $p\ts{th}$ harmonic number, $\gamma$ is the Euler-Mascheroni constant, and 
$\zeta_p	 = \frac{1}{2p} - \mathcal{O_k}(\frac{1}{p^4})$ is a monotonically decreasing sequence 
related to Hurwitz zeta function \cite{lang2013algebraic}. Therefore, using the identity \ref{eq:har} 
we obtain
\begin{equation}
\begin{aligned}
\Pr\{\Phi = \emptyset\}&\stackrel{}{\leq} (1-\frac{|O_k\backslash S_k^{(i)}|}{s}(H_n-H_{n-s}))^s \\
&\stackrel{}{=} (1-\frac{|O_k\backslash S_k^{(i)}|}{s}(\log(\frac{n}{n-s})+\zeta_n-\zeta_{n-s}))^s\\
&\stackrel{(c)}{\leq} (1-\frac{|O_k\backslash S_k^{(i)}|}{s}(\log(\frac{n}{n-s})-\frac{s}{2n(n-s)}))^s\\
&\stackrel{(d)}{\leq} ((1-\frac{s}{n})e^{\frac{s}{2n(n-s)}})^{|O_k\backslash S_k^{(i)}|},
\end{aligned}
\end{equation}
where $(c)$ follows since $\zeta_n-\zeta_{n-s} = \frac{1}{2n} -\frac{1}{2(n-s)} + 
\mathcal{O_k}(\frac{1}{(n-s)^4})$, and $(d)$ is due to the fact that $(1+x)^y\leq e^{xy}$ for any 
real number $y>0$. Next, the fact that $\log (1-x)\leq -x-\frac{x^2}{2}$ for $0<x<1$ yields
\begin{equation}
(1-\frac{s}{n})e^{\frac{s}{2n(n-s)}} \leq e^{-\frac{\beta_1 s}{n}},
\end{equation}
where $\beta_1 = 1 +(\frac{s}{2n}-\frac{1}{2(n-s)})$. On the other hand, we can also upper-bound $\Pr\{\Phi = \emptyset\}$ as
\begin{equation}
\begin{aligned}
\Pr\{\Phi = \emptyset\}& \leq \left(1-\frac{|O_k\backslash S_k^{(i)}|}{s}\sum_{l=0}^{s-1}\frac{1}{n-l}\right)^s\\
& \leq\left(1-\frac{|O_k\backslash S_k^{(i)}|}{n}\right)^s\\
&\leq e^{-\frac{s}{n}|O_k\backslash S_k^{(i)}|},
\end{aligned}
\end{equation}
where we again employed the inequality  $(1+x)^y\leq e^{xy}$. Let us denote $\beta = \max\{1,\beta_1\}$. 
Then 
\begin{equation}\label{eq:pbound}
\Pr\{\Phi \neq \emptyset\} \geq 1- e^{-\frac{\beta s}{n}|O_k\backslash S_k^{(i)}|}\geq \frac{1-\epsilon^\beta}{K}(|O_k\backslash S_k^{(i)}|),
\end{equation}
by the definition of $s$ and the fact that $1- e^{-\frac{\beta s}{n}x} $ is a concave function. Finally, according to 
Lemma 2 in \cite{mirzasoleiman2014lazier},
\begin{equation}\label{eq:lazy}
\E[f_{(i+1)_s}(S_k^{(i)})|S_k^{(i)}]\geq \frac{\Pr\{\Phi \neq \emptyset\}}{|O_k\backslash S_k^{(i)}|}\sum_{j\in O_k\backslash S_k^{(i)}}f_o(S_k^{(i)}).
\end{equation}
Combining \ref{eq:pbound} and \ref{eq:lazy} yields the stated results.
%%%%%%%%%%%%%%%%%%%%%%%%%%%%%%%%%%%%%%%%%%%%%%%%%%%%%%%%%%%%%
\end{appendices}
\bibliographystyle{IEEEtran}
\bibliography{IEEEabrv,refs}
\end{document}